%
\documentclass[runningheads]{llncs}
\usepackage[T1]{fontenc}
%
\usepackage{graphicx}
%
%

\usepackage{amssymb,xspace}
\usepackage{todonotes}
\usepackage{microtype}
\usepackage{stmaryrd}
\usepackage{xcolor}
\usepackage{thmtools}
\usepackage{amsmath}
\usepackage{hyperref}
\usepackage[capitalise]{cleveref}
\crefname{observation}{Observation}{Observations}
\Crefname{observation}{Observation}{Observations}
\usepackage{subcaption}
\usepackage{tikz}
\usetikzlibrary{automata,arrows.meta,tikzmark,calc,backgrounds,decorations.markings}
\newcommand{\tup}[1]{\langle #1 \rangle}
\newcommand{\dotp}[1]{\langle #1 \rangle}

\renewcommand{\vec}[1]{\boldsymbol{#1}}

\newcommand{\cA}{\mathcal{A}}

\newcommand{\cD}{\mathcal{D}}
\newcommand{\cF}{\mathcal{F}}

\newcommand{\cV}{\mathcal{V}}

\newcommand{\lshape}{\mathbb{L}_W}
\newcommand{\clshape}{\overline{\mathbb{L}}_W}
\newcommand{\reach}[1]{\mathsf{Reach}(#1)}
\newcommand{\boxreach}[1]{\mathsf{BoxReach}(#1)}
\newcommand{\cone}{\mathsf{cone}}
\newcommand{\intCone}{\mathsf{intCone}}
\newcommand{\lattice}{\mathsf{lattice}}
\newcommand{\norm}[1]{\left\lVert#1\right\rVert}

\newcommand{\ST}{\mbox{ s.t. }}

\newcommand{\spos}{\vec{s}_{\!\nearrow}}
\newcommand{\flow}{\vec{f}_{\!\!\!\nearrow}}
\newcommand{\fhigh}{\vec{f}_{\!\!\nnwarrow}}

\newcommand{\vlow}{\vec{\chi}_{\!\!\!\nearrow}}
\newcommand{\vhigh}{\vec{\chi}_{\!\!\nnwarrow}}

\newcommand{\leads}[1]{\xrightarrow{#1}}
\newcommand{\Nleads}[1]{\xrightarrow{#1}_\bbN}
\newcommand{\Zleads}[1]{\xrightarrow{#1}_\bbZ}
\newcommand{\Bleads}[1]{\xrightarrow{#1}_{\Box}}

\newcommand{\bbN}{\mathbb{N}}
\newcommand{\bbR}{\mathbb{R}}
\newcommand{\bbZ}{\mathbb{Z}}
\newcommand{\posquad}{\bbR^2_{\ge 0}}

\newcommand{\Ackermannc}{\texttt{ACKERMANN-Complete}\xspace}

\newcommand{\VAS}{\mbox{\rm VAS}\xspace}
\newcommand{\dVAS}[1]{#1\mbox{\rm-VAS}\xspace}
\newcommand{\dVASS}[1]{#1\mbox{\rm-VASS}\xspace}
\newcommand{\oVASS}{\dVASS{1}}

\newcommand{\tar}{\circledcirc}
\newcommand{\xc}{{\mathsf{x}}}
\newcommand{\yc}{{\mathsf{y}}}

\newcommand{\eff}{\mathsf{eff}}
\newcommand{\effx}{\mathsf{eff}_\xc}
\newcommand{\effy}{\mathsf{eff}_\yc}
\newcommand{\drop}{\mathsf{drop}}
\newcommand{\peak}{\mathsf{peak}}
\newcommand{\oshoot}{\mathsf{over}}
\newcommand{\mshoot}{\mathsf{maxover}}

\newcommand{\LPS}{\mathsf{LPS}(|Q|,\norm{T})}
\newcommand{\BLPS}{B_{\mathsf{LPS}}}

\spnewtheorem{observation}{Observation}{\itshape}{\rmfamily}

\urlstyle{rm}

\begin{document}

\title{Box-Reachability in Vector Addition Systems}

\author{
  Shaull Almagor\inst{1}\orcidID{0000-0001-9021-1175} \and
  Itay Hasson\inst{1} \and
  Micha\l{} Pilipczuk\inst{2}\orcidID{0000-0001-7891-1988} \and
  Michael Zaslavski\inst{3}
}

\authorrunning{S. Almagor, I. Hasson, Mi. Pilipczuk and M. Zaslavski}

\institute{
  Department of Computer Science, Technion, Israel \\
  \email{\{shaull@technion.ac.il, itay.h@campus.technion.ac.il\}}
  \and
  University of Warsaw \\
  \email{michal.pilipczuk@mimuw.edu.pl}
  \and
  Unaffiliated
}

\maketitle


\begin{abstract}
We consider a variant of reachability in Vector Addition Systems (VAS) dubbed \emph{box reachability}, whereby a vector $\vec{v}\in \bbN^d$ is box-reachable from $\vec{0}$ in a VAS $\cV$ if $\cV$ admits a path from $\vec{0}$ to $\vec{v}$ that not only stays in the positive orthant (as in the standard VAS semantics), but also stays below $\vec{v}$, i.e., within the ``box'' whose opposite corners are $\vec{0}$ and $\vec{v}$.

Our main result is that for two-dimensional VAS, the set of box-reachable vertices almost coincides with the standard reachability set: the two sets coincide for all vectors whose coordinates are both above some threshold $W$.
We also study properties of box-reachability, exploring the differences and similarities with standard reachability.

Technically, our main result is proved using powerful machinery from convex geometry.
\end{abstract}

\section{Introduction}
\label{sec:intro}
Vector Addition Systems (VAS) are a well-established formalism for modelling and reasoning about concurrent systems, hardware and software analysis, biological and chemical processes, and many more. 
In particular, VAS and Petri-Nets are equivalent (in a sense), with the latter being a long-studied model~\cite{schmitz2016complexity}.

Formally, a VAS is given by a finite set of vectors $\cV\subseteq \bbZ^d$. A \emph{trace} in $\cV$ is then a sequence of vectors obtained by summing vectors from $\cV$, provided that all coordinates remain non-negative. The latter requirement is dubbed the \emph{VAS Semantics}.
A vector $\vec{v}$ is \emph{reachable} if there is a trace that starts\footnote{Often a starting vector is also given, but the problem has the same ``flavour'' also when starting from $\vec{0}$.} at $\vec{0}$ and ends in $\vec{v}$.

The fundamental problem for VAS is \emph{reachability}: given a vector $\vec{v}$, decide whether it is reachable in $\cV$. This problem has a rich history, and its exact complexity has been only recently settled as $\Ackermannc$\cite{czerwinski2022reachability,leroux2022reachability}, with better bounds known for dimensions $1$ and $2$.
A more involved problem is characterizing the set of reachable vectors. In general, there is no simple characterization (which is understandable, given the high complexity of reachability). However, in dimension up to five this set is \emph{semilinear}~\cite{hopcroft1979reachability}.

From a modelling perspective, the coordinates of the vectors in a VAS typically correspond to some resources (e.g., memory locks, battery level, queue size, molecule concentration, etc.). Then, reachability amounts to the question of whether a certain configuration of the resources can be reached. 
In the standard reachability question, no attention is paid to \emph{how} the target configuration is reached. In particular, it may be the case that in order to reach a certain target configuration, the trace must go through configurations with a higher number of certain resources than the target. 
Such traces may be less useful to the designer, or might not fit with their intent. For example, consider a chemical solution of three chemical compounds $A,B,C$, whose respective quantities are modelled as $(x_A,x_B,x_C)$. Suppose we start at $(1,2,2)$ and wish to reach $(10,10,10)$. 
It may be the case that due to reactions between the materials, one first needs to flood the solution with compound $A$ (going above $10$), and only then the reactions can lead down to $(10,10,10)$. In case $A$ is expensive, dangerous or volatile, this trace might not be useful for actual reachability. Similar settings can occur with any types of resources, where one does not want to go over the amount of the resources in the target before reaching it.

To capture this notion, we introduce \emph{box-reachability}: a vector $\vec{x}=(x_1,\ldots,x_d)\in \bbN^d$ is \emph{box-reachable} in a VAS $\cV$ if there is a trace from $\vec{0}$ to $\vec{x}$ such that throughout the entire trace, each coordinate $i$ remains at most $x_i$. 

\begin{example}
\label{xmp:box reachability}
Consider a \dVAS{2} $\cV=\{(-1,2),(2,-1),(10,10)\}$. Let us examine the configurations $\vec{t_1}=(11,11)$ and $\vec{t_2}=(21,21)$.

Observe that $(-1,2)+(2,-1)=(1,1)$, and the path $(-1,2),(2,-1)$ can be taken from every vector $(x,y)\ge (1,0)$. Thus, both $\vec{t_1}$ and $\vec{t_2}$ are reachable in $\cV$, via the paths
$\pi_1: \vec{0}\Nleads{(10,10)(-1,2)(2,-1)}{(11,11)}$  and $\pi_2: \vec{0}\Nleads{(10,10)((-1,2)(2,-1))^{11}}{(21,21)}$, respectively.
However, both $\pi_1$ and $\pi_2$ exceed their respective boxes.  Indeed, the penultimate vector reached by $\pi_1$ is $(9,12)$ (exceeding $(11,11)$), and by $\pi_2$ is $(19,22)$ (exceeding $(21,21)$). 

It is not hard to see that $\vec{t_1}\notin \boxreach{\cV}$. Indeed, the first vector that must be taken in any path in $\cV$ is $(10,10)$, and from there any added vector in $\cV$ exceeds the $(11,11)$ box.

However, $\vec{t_2}\in \boxreach{\cV}$, via the run $\pi_3: \vec{0}\Nleads{(10,10)(-1,2)(2,-1)(10,10)}{(21,21)}$.
Moreover, a similar path can be taken to box-reach any vector $(k,k)$ with $k \ge20$, via the path $\vec{0}\Nleads{(10,10)((-1,2)(2,-1))^{k-20}(10,10)}{(k,k)}$. 

This example illustrates two central characteristics of the box-reachable set. First, we notice that the classical reachability set does not coincide with box-reachability. Indeed, as we show above, $(11,11)$ is reachable but not box-reachable.
An even more obvious example of this is that $(30,0)$ is reachable in $\cV$ (via the path $(10,10)(2,-1)^{10}$), but is not box-reachable, since all paths start in $(10,10)$ and in particular go above $0$ in the $\yc$-coordinate.

The second characteristic is that vectors that are ``deep'' enough in the positive quadrant, e.g., $(21,21)$, allow us enough wiggle room along the path to apply all the ``unsafe'' vectors that exceed their box (e.g., $(-1,2),(2,-1)$) without leaving the overall box. This suggests that perhaps the box-reachability set and the classical reachability set do coincide for vectors with large-enough entries. Indeed, this is the main contribution of this paper, for dimension 2.
 \hfill \qed
\end{example} 

\paragraph{Contribution.}
We focus on dimension 2 (see \cref{xmp:2vass and 3vas,rmk: one cycle is crucial} for context on this). 
In \cref{thm:main:reach coincides with box reach from W} we characterize the box-reachable set as follows: for every $\dVAS{2}$ there exists a (polynomial size) $W\in \bbN$ such that the reachability set and the box-reachability sets coincide on $[W,\infty)^2$.

Technically, the proof of \cref{thm:main:reach coincides with box reach from W}
uses insights on the geometry of the reachable set in the context of the cone it spans. We rely on nontrivial machinery from convex geometry (specifically, Steinitz's lemma~\cite{grinberg1980value} and the Deep-in-the-cone Lemma~\cite{cslovjecsek2025parameterized}; both are introduced in \cref{sec:cones and lattices}). 

Focusing on $\dVAS{2}$ may seem limited. To justify this, we show that our result no longer holds neither for $\dVAS{3}$ nor for $\dVASS{1}$ (i.e., Vector Addition Systems with States). 

In addition, we study the shape of the box-reachable set, showing that it is semilinear for $\dVAS{2}$ as well as $\oVASS$, and drawing a connection between box reachability in $\dVAS{d}$ and standard reachability in $\dVAS{2d}$.


In a broader context, our work can be viewed as a form of \emph{boundedness} constraint on the VAS, in the sense that box-reaching runs are bounded by their target. Unlike boundedness by a constant, this still allows counters to become arbitrarily large. 

In terms of understanding the geometry of VAS, our contribution stands in an interesting contrast to standard methods of reasoning about VAS: the most common approach in reasoning about low-dimensional VAS (and VASS) is that of Linear Path Schemes, which intuitively allow to iterate a small number of cycles many times. However, LPS are ``highly non-box safe'', in the sense that LPS typically reach their target by taking one cycle many times, making some counter very large, and then correcting this by iterating another cycle, and so on. Thus, LPS guide the path far outside the box and then back. 
In our work, we need to exactly avoid this type of behaviors.

\paragraph{Related Work.}
Reachability in VAS and VASS have received much attention in recent years. We refer the reader to \cite{schmitz2016complexity} for a survey, and to~\cite{leroux2022reachability,czerwinski2022reachability} for more recent works and references therein.
More closely related to this work are studies concerning boundedness and geometrical understanding of the reachability set. 
Boundedness has been studied in early works on VAS~\cite{karp1969parallel,rackoff1978covering}, as it is essential to the study of coverability. Later studies~\cite{rosier1986multiparameter,demri2013selective,almagor2020coverability,kunnemann2023coverability} refine the notion of boundedness or look at variants of the model, as well as obtain improved bounds for coverability.

On the geometry front, several works have explicitly tried to provide a better understanding of the reachability sets of VASS, e.g.,~\cite{blondin2017logics,almagor2023geometry} where the geometry of the continuous variant of VASS is considered, and~\cite{guttenberg2023geometry} which explicitly aims to characterize geometric properties of the reachability set.

\paragraph{Paper Organization.}
In \cref{sec:box reachability} we present the notion of box reachability, and show why we focus on $\dVAS{2}$. In \cref{sec: reach and box reach coincide} we prove that reachability and box-reachability eventually coincide. This is split into several parts: \cref{sec:cones and lattices} presents the geometric tools we use, and \cref{sec: cone contains first quadrant,sec: cone does not contain quadrant} prove the theorem by addressing two different cases, with different techniques.
In \cref{sec:semilinearity} we show the semilinearity of box-reachability for $\dVAS{2}$ and for $\oVASS$.
We conclude with a discussion in \cref{sec:discussion}. 

Due to space constraints, most proofs appear in the appendix.


\section{Preliminaries}
Let $\bbN=\{0,1,2,\dots\}$ denote the naturals and $\bbZ$ the integers. 
For $i\le j\in \bbZ$ we denote by $[i,j]=\{i,i+1,\dots,j\}$ the interval between $i$ and $j$. We extend the notation to $[i,\infty]=\{m\in\bbZ\mid m\ge i\}$. 
We denote vectors in $\bbZ^d$ in bold (e.g., $\vec{u}$), and refer to their components by subscripts: $\vec{u}=(\vec{u}_1, \ldots, \vec{u}_d)$. We denote $\norm{\vec{u}}=\max_{i\in [1,d]}\{|\vec{u}_i|\}$ the infinity-norm of $\vec{u}$.

For two vectors $\vec{u},\vec{v}\in \bbZ^d$ we write $\vec{u}\leq \vec{v}$ if $\vec{u}_i \leq \vec{v}_i$ for all $i\in [1,d]$.
For a set $S$ we denote by $S^*$ (resp. $S^+$) the set of finite sequences (resp. non-empty finite sequences) of elements of $S$.
For a vector $\vec{v}\in \bbR^2$, we typically refer to its coordinates as $\vec{v}=(\vec{v}_\xc,\vec{v}_\yc)$. 

A \emph{vector addition system of dimension $d$} ($\dVAS{d}$, for short) is a finite set of vectors $\cV\subseteq \bbZ^d$. We denote by $\norm{\cV}=d\cdot \sum_{\vec{v}\in \cV}\norm{\vec{v}}$.
A \emph{path} of $\cV$ is a finite sequence of vectors $\pi=\vec{v_1}, \vec{v_2},\ldots, \vec{v_n}\in \cV^*$ (we sometimes omit the commas for brevity). The \emph{length} of $\pi$ is $|\pi|=n$.
For $i,j\le n$ we denote the infix $\pi[i,j]=\vec{v_i},\ldots,\vec{v_j}$ (if $j<i$ the infix is empty). We also denote $\pi[j,\ldots]=\pi[j,|\pi|]$ and $\pi[j]=\pi[j,j]$.

We introduce some notation for various properties of $\pi$: the \emph{effect} of $\pi$ is $\eff(\pi)=\sum_{i=1}^n\vec{v_i}$. Fix a coordinate $1\le k\le d$, we denote by $\eff_k(\pi)=\eff(\pi)_k$ the effect in coordinate $k$. 
The \emph{drop} and \emph{peak} of $\pi$ in coordinate $k$ are $\drop_k(\pi)=|\min_{j\in [0,n]}{\eff_k(\pi[1,j])}|$ and $\peak_k(\pi)=\max_{j\in [0,n]}{\eff_k(\pi[1,j])}$. Observe that $\drop_k(\pi)\ge 0$ and $\peak_k(\pi)\ge 0$, since for $j=0$ we have that $\pi[1,0]$ is empty, and thus has effect $\vec{0}$.

A starting vector $\vec{s}$ and a path $\pi$ as above induce the \emph{trace} $\vec{s_0},\vec{s_1},\ldots,\vec{s_n}$  where $s_0=s$ and for every $0<i\le n$ we have $s_i=s_{i-1}+\vec{v_i}$. We then write $\vec{s}\leads{\pi}\vec{s_n}$. If $\vec{s_i}\in \bbN^d$ for all $i\in [0,n]$, we write $\vec{s}\Nleads{\pi}\vec{s_n}$, and if we wish to emphasize that some coordinates may be negative, we write $\vec{s}\Zleads{\pi}\vec{s_n}$.

A vector $\vec{t}$ is \emph{reachable} in $\cV$ if there exists a path $\pi$ such that $\vec{0}\Nleads{\pi}\vec{t}$. We say that $\vec{t}$ is \emph{$\bbZ$-reachable} if $\vec{0}\leads{\pi}\vec{t}$, but not necessarily via a non-negative path.
The \emph{reachability set of $\cV$} is then $\reach{\cV}=\{\vec{t}\in \bbN^d\mid \exists \pi\in \cV^* \ST \vec{0}\Nleads{\pi}\vec{t} \}$.

The fundamental problem regarding \VAS is the \emph{reachability problem}: given a \VAS $\cV$ and vector $\vec{t}$, decide whether $\vec{t}\in \reach{\cV}$. The complexity bounds on this problem were recently tightened to \Ackermannc~\cite{leroux2022reachability,czerwinski2022reachability}. 
For general dimensions $d$, the reachability set can be complicated. For dimension up to 5, however, this set is always effectively \emph{semilinear}~\cite{hopcroft1979reachability}. 
We remark that this is also known for $\dVASS{2}$~\cite{blondin2021reachability,leroux2004flatness}. 

\section{Box Reachability}
\label{sec:box reachability}
We start by introducing our main object of study, namely box-reachable vectors. 
Consider a \dVAS{$d$} $\cV$ and a vector $\vec{t}\in \bbN^d$. We say that $\vec{t}$ is \emph{box-reachable in $\cV$}, denoted $\vec{0}\Bleads{\pi}\vec{t}$ if there is a path $\pi\in \cV^*$ such that $\vec{0}\Nleads{\pi}\vec{t}$ and in addition, for every $1\le i\le |\pi|$ it holds that $\eff(\pi[1,i])\le \vec{t}$. That is, the trace induced by $\pi$ from $\vec{0}$ remains within the ``box'' whose opposite corners are $\vec{0}$ and $\vec{t}$.
The \emph{box-reachability set} of $\cV$ is then $\boxreach{\cV}=\left\{\vec{v}\in\bbN^d \mid \vec{v} \text{ is box-reachable in } \cV \right\}$.

In contrast with the reachability problem for \VAS, the corresponding \emph{box-reachability problem}, namely deciding whether a vector $\vec{t}$ is box-reachable in $\cV$, is far less involved. Indeed, there are at most $\norm{\vec{t}}^d$ vectors that can be traversed in order to box-reach $\vec{t}$, and since a shortest path to $\vec{t}$ does not visit the same vector twice, we can easily limit the search space. 
However, as discussed in \cref{sec:intro}, given a VAS $\cV$, we would like to understand the general shape of the box-reachable set of configurations. 

We can now state our main result, which shows that for vectors with large-enough entries (i.e., deep enough in the first quadrant), reachability coincides with box-reachability (see \cref{sec: reach and box reach coincide}).
\begin{restatable}{theorem} {reachandboxcoincide}
    \label{thm:main:reach coincides with box reach from W}
    For every $\dVAS{2}$ $\cV$, there exists an effectively-computable $W\in \bbN$ such that $\boxreach{\cV}\cap [W,\infty]^2=\reach{\cV}\cap [W,\infty]^2$.
\end{restatable}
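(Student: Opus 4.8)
One inclusion is free: a box-reaching path is in particular a reaching path, so $\boxreach{\cV}\subseteq\reach{\cV}$ with no constraint on the coordinates, and hence $\boxreach{\cV}\cap[W,\infty]^2\subseteq\reach{\cV}\cap[W,\infty]^2$ for every $W$. The entire content is the reverse inclusion, so I would fix a threshold $W$ to be determined (a polynomial in $\norm{\cV}$), take $\vec{t}\in\reach{\cV}$ with $\vec{t}\ge(W,W)$, and produce a path $\vec{0}\Bleads{\pi'}\vec{t}$. A path witnesses box-reachability of $\vec{t}$ exactly when every prefix has effect in $[\vec{0},\vec{t}]$, and I would split this into a \emph{lower} requirement ($\eff(\pi'[1,i])\ge\vec{0}$, i.e.\ the usual \VAS semantics) and an \emph{upper} requirement ($\eff(\pi'[1,i])\le\vec{t}$), controlled by separate means. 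Since the effect of any reaching path is a nonnegative integer combination of $\cV$, we have $\vec{t}\in\intCone(\cV)\subseteq\cone(\cV)$, so the convex cone $\cone(\cV)$ is the driving object, and the proof splits on whether $\cone(\cV)\supseteq\posquad$ (see \cref{sec:cones and lattices} for the geometric toolkit and \cref{sec: cone contains first quadrant}, \cref{sec: cone does not contain quadrant} for the two cases).

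\textbf{Case $\cone(\cV)\not\supseteq\posquad$.} Then $\cone(\cV)\ne\bbR^2$, so it is contained in a closed halfplane $H=\{\vec{z}:\langle\vec{n},\vec{z}\rangle\ge0\}$ through the origin; since every prefix effect is a nonnegative integer combination of $\cV$-vectors, \emph{every} trace started at $\vec{0}$ stays in $H\cap\posquad$ automatically. If $\cone(\cV)\cap\mathrm{int}(\posquad)=\emptyset$, then $\reach{\cV}\subseteq\cone(\cV)\cap\posquad$, which being a convex subset of $\posquad$ missing its interior lies on a single coordinate axis, so $\reach{\cV}\cap[W,\infty]^2=\emptyset$ for every $W\ge1$ and the statement is vacuous. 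Otherwise $\cone(\cV)$ is a proper wedge meeting the open quadrant; here I would use the one-sidedness of $H$, which restricts the ways a reaching path can overshoot $\vec{t}$, together with a Steinitz reordering (below) of the reaching path, to obtain box-safety up to an $O(\norm{\cV})$ correction.

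\textbf{Case $\cone(\cV)\supseteq\posquad$.} This is the ``generic'' case, where the large cone gives routing freedom. I would first rewrite (a slight truncation of) $\vec{t}$ as a nonnegative integer combination $\vec{t}'=\vec{g}_1+\dots+\vec{g}_m$ of vectors of $\cV$ whose partial sums all lie deep inside $\cone(\cV)$, using the Deep-in-the-cone Lemma~\cite{cslovjecsek2025parameterized}; since $\vec{t}$ is $W$-deep with $W$ large, the truncation and the ``deep'' decomposition exist with $m$ polynomial in $\norm{\cV}$. This is meant to handle the lower requirement --- though care is needed, since $\cone(\cV)$ is strictly larger than $\posquad$, so I would additionally pin the decomposition so that the prefixes stay in the quadrant, not merely in the cone. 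Then, by Steinitz's Lemma~\cite{grinberg1980value}, I would reorder $\vec{g}_1,\dots,\vec{g}_m$ so that every partial sum stays within additive distance $C=O(\norm{\cV})$ of the segment from $\vec{0}$ to $\vec{t}'$; as that segment lies inside $[\vec{0},\vec{t}]$, this handles the upper requirement up to the slack $C$. Finally I would remove the $\pm C$ slack: build a box-safe ``platform'' of height $\Theta(C)$ in both coordinates from which to run the reordered path (killing the undershoot), and cap the residual overshoot and the remaining $\Theta(C)$ gap to $\vec{t}$ by short correcting subpaths, all available because $\cone(\cV)$ contains the quadrant --- this is where $W$ must exceed a fixed polynomial in $\norm{\cV}$ to leave room. (As a byproduct this would show $\reach{\cV}$, $\intCone(\cV)\cap\bbN^2$ and $\boxreach{\cV}$ all coincide on $[W,\infty]^2$ in this case.)

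\textbf{Main obstacle.} In both cases the crux is the last step. Steinitz's Lemma and the Deep-in-the-cone Lemma only place the trajectory in a \emph{slightly enlarged} box, namely the box with corners $(-C,-C)$ and $\vec{t}+(C,C)$ for $C=O(\norm{\cV})$, and the real difficulty is upgrading this to a trajectory lying \emph{exactly} in $[\vec{0},\vec{t}]$: it is precisely here that the two cases demand genuinely different arguments (exploiting the abundance of directions when $\cone(\cV)\supseteq\posquad$, versus the rigidity of a one-sided cone otherwise), and where the restriction to dimension $2$ is used. A secondary obstacle, needed to meet the statement, is the quantitative bookkeeping: certifying that the ``deep'' decomposition bound, the Steinitz constant, the platform, and hence $W$ itself, are all effectively computable and polynomial in $\norm{\cV}$.
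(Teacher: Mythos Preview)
Your overall architecture matches the paper's: one inclusion is trivial, and for the other you split on whether $\cone(\cV)\supseteq\posquad$, invoking Steinitz and Deep-in-the-Cone. For the case $\cone(\cV)\supseteq\posquad$ your plan is essentially the paper's, modulo vagueness about the ``platform''. Concretely, the paper constructs an explicit box-reachable vector $\spos\ge(2\norm{\cV},2\norm{\cV})$ (built from one or two generators of $\cV$, not relying on the cone containing the quadrant), shows $\vec{t}-2\spos$ is $M$-deep, applies Deep-in-the-Cone to get $\vec{t}-2\spos\in\intCone(\cV)$, Steinitz-reorders, and sandwiches the result with a box-reaching path to $\spos$ on both ends. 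Your ``platform'' and ``cap'' are exactly this symmetric $\spos$; there is no separate ``correcting subpath'', and no need to appeal to the cone containing the quadrant for that step. (Also, your claim that the number $m$ of summands is polynomial in $\norm{\cV}$ is wrong---$m$ grows with $\norm{\vec t}$---and Deep-in-the-Cone says nothing about partial sums being deep.)

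The genuine gap is your other case. The hard subcase is when one extremal ray $\vlow$ of $\cone(\cV)$ lies strictly inside the open quadrant and $\vec{t}$ sits close to that ray. Then $\vec{t}-2\spos$ may fall \emph{outside} the cone, so Deep-in-the-Cone is unavailable, and merely Steinitz-reordering a reaching path still leaves you with no box-safe prefix: the segment $[\vec{0},\vec{t}]$ starts at the origin, so the Steinitz corridor allows both coordinates to dip to $-2\norm{\cV}$ near the start, and ``one-sidedness of $H$'' does not fix this (it only says prefix sums stay in $\cone(\cV)$, which is strictly larger than $\posquad$ here). The missing idea is an extraction argument. Let $\flow$ be the inward normal to the $\vlow$-facet; since all entries are integers, every $\vec{u}\in\cV$ has $\langle\flow,\vec{u}\rangle\in\bbN$, so each step of a reaching path either is parallel to $\vlow$ or raises $\langle\flow,\cdot\rangle$ by at least $1$. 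Because $\langle\flow,\vec{t}\rangle$ is small (that is what ``close to $\vlow$'' means) while $\vec{t}_\xc\ge W$ is large, any reaching path must contain many steps that are \emph{positive} multiples of $\vlow$---these have both coordinates positive, hence are box-safe. The paper pulls $2\norm{\cV}$ of them to the front and $2\norm{\cV}$ to the back as the platform and cap, and Steinitz-reorders only the remaining middle. Without this extraction your Case~2 does not close.
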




Our focus on $\dVAS{2}$ may seem restrictive. However, as we now demonstrate (see also \cref{sec:discussion}), \cref{thm:main:reach coincides with box reach from W} cannot be extended to richer models, namely to $\dVAS{3}$ or to $\dVASS{1}$ (Vector Addition Systems with States). 
\begin{example}
    \label{xmp:2vass and 3vas}
    Consider the $\dVAS{3}$ $\cV=\{(0,1,1),(1,2,-1),(1,-1,2)\}$ and targets of the form $(2n,n+1,n+1)$. Note that these targets are arbitrarily ``deep'' in the positive octant, and are reachable via the path $(0,1,1),((1,2,-1),(1,-1,2))^n$.

    However, these targets are not box-reachable: in order to start any path, the vector $(0,1,1)$ must be taken first.  Then, it is easy to see that the number of $(1,2,-1)$ and $(1,-1,2)$ used must be equal, and equal to $n$. Therefore, after taking $(0,1,1)$, the rest of any reaching path consists of vectors with negative entries, and in particular is not box-reaching.

    For $\dVASS{1}$, it suffices to have two states $q_0,q_1$ such that $q_1$ is only reachable via a negative number (but is reachable). Then any path ending in $q_1$ cannot be box-reaching.
\end{example}

\section{Reachability and Box Reachability Eventually Coincide}
\label{sec: reach and box reach coincide}
In this section we prove \cref{thm:main:reach coincides with box reach from W}. The proof relies on two results regarding cones and lattices: the first is the \emph{Steinitz Lemma}~\cite{grinberg1980value}, which intuitively allows us to reorder paths so that they do not diverge wildly, and instead stay within some bounded ``corridor''. The second is the \emph{Deep-in-the-Cone Lemma}~\cite{cslovjecsek2025parameterized}, which connects $\bbN$-reachability with $\bbZ$-reachability. 
We start with some definitions and known results.

\subsection{Reachability, Cones and Lattices}
\label{sec:cones and lattices}
Consider a set of vectors $D=\{\vec{v_1}, \vec{v_2},\ldots,\vec{v_n}\}\subseteq\bbZ^d$ (we do not think of it as a \VAS at this point). 
The \emph{cone} spanned by $D$ is the set of vectors in $\bbR^d$ expressible as non-negative combinations of the vectors in $D$. Similarly, we define the \emph{integer cone} where we restrict attention to non-negative integer combinations, and the \emph{lattice}, where the coefficients are all integers:
\begin{align*}
    &\cone(D) := \left\{ \lambda_1 \vec{v_1}+\ldots +\lambda_n\vec{v_n} \ \vert \ \forall i.\ \lambda_i \in \bbR_{\geq 0} \right\} \subseteq \bbR^d\\
    &\intCone(D) := \left\{ \lambda_1 \vec{v_1}+\ldots +\lambda_n\vec{v_n} \ \vert \ \forall i.\ \lambda_i \in \bbN \right\} \subseteq \bbZ^d\\
    &\lattice(D) := \left\{ \lambda_1 \vec{v_1}+\ldots +\lambda_n\vec{v_n} \ \vert \ \forall i.\ \lambda_i \in \bbZ \right\} \subseteq \bbZ^d
\end{align*}
The set $D$ can be thought of as a representation of $\cone(D)$, typically called the \emph{V-representation} (where V stands for ``Vertex''). A classical result by Weyl~\cite{weyl1934elementare} shows that we can compute from $D$ a set $\cF\subseteq\bbZ^d$ such that\footnote{$\tup{\cdot,\cdot}$ is the standard inner product.}
$
\cone(D) = \left\{ \vec{v}\in\bbR^d \mid \forall \vec{f}\in\cF: \tup{\vec{f},\vec{v}}\ge 0 \right\}.$
The set $\cF$ is called the \emph{H-representation} of $\cone(D)$ (where H stands for ``Halfspace'').
Intuitively, the set $\cF$ can be understood as comprising vectors that are perpendicular to the boundaries of the cone. Each vector $\vec{f}\in\cF$ defines a (positive) halfspace in $\bbR^d$, and the inequality $\tup{\vec{f},\vec{v}}\geq 0$ states that the vector $\vec{v}$ lies on the ``correct'' side of this hyperplane. 

In general, computing the H-representation from $D$ may take exponential time. 
However, for cones in $\bbR^2$ an H-representation has at most two vectors $\cF=\{\vec{f_1},\vec{f_2}\}$, and it can be computed in polynomial time. 
Moreover, $\norm{\vec{f_1}}$ and $\norm{\vec{f_2}}$ are polynomial in $\norm{D}=\max\{\norm{\vec{v}}\mid \vec{v}\in D\}$. Indeed, this follows by finding two ``extremal'' vectors that span the cone (or by first detecting that the cone is all of $\bbR^2$, a single halfspace, or a single ray -- easier cases commented on in the following), which is in turn done by sorting $v_1,\ldots, v_n$ by their angle with the $\xc$-axis, and then the facets are defined by normals to these two generating vectors. In particular, the entries of the normals are integers, and their description size is identical to that of the vectors, e.g., the normal we take to $(x,y)$ is either $(-y,x)$ or $(y,-x)$, depending on the required direction of the halfspace.

For an H-representation $\cF$ of $\cone(D)$, each $\vec{f}\in \cF$ induces a measure of how ``far'' a vector $\vec{v}$ is from the respective facet of $\cone(D)$, namely the inner product $\dotp{\vec{f},\vec{v}
}$.
Specifically, given some $M\in \bbN$, we say that a vector $\vec{v}\in \mathrm{cone}(D)$ is \emph{$M$-deep in the cone} if $\dotp{\vec{f},\vec{v}}\geq M$ for every $\vec{f}\in\cF$.
Our first tool is the following lemma from~\cite{cslovjecsek2025parameterized} (which actually follows from existing literature, see references in~\cite{cslovjecsek2025parameterized}), which connects reachability in the integer cone with reachability in the lattice.
\begin{lemma}[Deep-in-the-Cone, Lemma 16 in~\cite{cslovjecsek2025parameterized}]
\label{lem:ditc}
Given a set of vectors $D=\{\vec{v_1}, \vec{v_2},\ldots,\vec{v_n}\}\subseteq\bbZ^d$ and the H-representation $\cF$ of $\cone(D)$, there exists a constant $M\in \bbN$, depending on $D$ and $\cF$, such that if $\vec{v}\in \cone(D)\cap\bbZ^d$ is $M$-deep in the cone, then $\vec{v}\in \intCone(D)$ if and only if $\vec{v}\in \lattice(D)$. 

Moreover, $M$ is polynomial in $\norm{D},\norm{\cF}$ and exponential in $d$.
\end{lemma}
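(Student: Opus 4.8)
This result is quoted from~\cite{cslovjecsek2025parameterized} and, as noted there, can be assembled from known ingredients; here is how I would prove it. The forward implication $\intCone(D)\subseteq\lattice(D)$ is immediate, so everything lies in producing an $M$ for which every $M$-deep $\vec{v}\in\cone(D)\cap\lattice(D)$ belongs to $\intCone(D)$. The non-full-dimensional and non-pointed cases reduce to the pointed full-dimensional one by standard arguments (work inside $\mathrm{span}(D)$; split off the lineality space of the cone), so assume $\cone(D)$ is pointed and full-dimensional. Fix a triangulation of $\cone(D)$ into simplicial subcones $\cone(D_1),\dots,\cone(D_m)$ that uses only rays through vectors of $D$, so each $D_\ell\subseteq D$ is linearly independent with $|D_\ell|=d$; such triangulations exist and are computable. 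Set $\Delta:=\max_\ell|\det D_\ell|$, which also bounds every index $[\lattice(D):\lattice(D_\ell)]$.

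The conceptual core is that \emph{simplicial cones need no depth at all}: if $D'\subseteq\bbZ^d$ is linearly independent then $\cone(D')\cap\lattice(D')=\intCone(D')$, since a vector in $\cone(D')$ has a unique expression as a real non-negative combination of $D'$ and, when it also lies in $\lattice(D')$, a unique expression as an integer combination, and these must agree. Depth is thus needed solely to cross the gap between the ambient lattice $\lattice(D)$ and the finer-indexed lattices $\lattice(D_\ell)$. Concretely, suppose $\vec{v}$ is $M$-deep and lies in a piece $\cone(D_\ell)$ in which it is itself deep. Since $D$ generates $\lattice(D)$ and the quotient $\lattice(D)/\lattice(D_\ell)$ has order at most $\Delta$, the class of $\vec{v}$ modulo $\lattice(D_\ell)$ is represented by some $\vec{u}=\sum_j c_j\vec{v_j}$ with $0\le c_j<\Delta$; this $\vec{u}$ lies in $\intCone(D)$ and has norm polynomially bounded in $\norm D$. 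Then $\vec{v}-\vec{u}\in\lattice(D_\ell)$, and, provided $\vec{v}$ is deep enough in $\cone(D_\ell)$ to dominate the (Cramer-bounded) $D_\ell$-coordinates of $\vec{u}$, also $\vec{v}-\vec{u}\in\cone(D_\ell)$; hence $\vec{v}-\vec{u}\in\intCone(D_\ell)\subseteq\intCone(D)$ by the simplicial fact, and $\vec{v}=\vec{u}+(\vec{v}-\vec{u})\in\intCone(D)$.

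The obstacle is exactly the clause ``a piece in which it is itself deep'': a vector that is $M$-deep in $\cone(D)$ need not be deep in the triangulation piece containing it, because it may hug a facet of that piece which is \emph{internal} to $\cone(D)$, hence invisible to the depth measured against $\cF$. I would remove this obstacle by an induction on $d=\dim\cone(D)$: a vector hugging such an internal facet $\cone(D'')$ ($D''\subseteq D$ linearly independent, $|D''|=d-1$) lies within a bounded distance of a vector of the $(d-1)$-dimensional cone $\cone(D'')$ that is genuinely deep there, so the inductive statement for $\cone(D'')$, followed by one more bounded monoid correction, places $\vec{v}$ in $\intCone(D)$; iterated down the face lattice, this is the technical heart of the general-$d$ case, and the base cases $d\le 1$ are simplicial. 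For the only case the paper uses, $d=2$, the obstacle is absent: after the degenerate-case reduction a pointed two-dimensional cone \emph{is} simplicial (its two extreme rays run through vectors of $D$), so the triangulation is trivial, there are no internal facets, and the proof is precisely the two-step scheme above.

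It remains to collect the estimates. One has $\Delta\le(d\norm D)^d$; the corrector $\vec{u}$ satisfies $\norm{\vec{u}}\le |D|\,\Delta\,\norm D$, and its coordinates in a $D_\ell$-basis, being quotients of $d\times d$ subdeterminants, are bounded by a polynomial in $\norm D$ and $|D|$; the depth needed to dominate them, and the analogous corrections in the inductive step, are bounded the same way. Putting these together yields an $M$ that is polynomial in $\norm D$, $|D|$ and $\norm{\cF}$ and exponential in $d$, as claimed.
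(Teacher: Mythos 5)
This lemma is not proved in the paper at all: it is imported verbatim from~\cite{cslovjecsek2025parameterized} (Lemma~16 there), with a remark that it follows from existing literature, so there is no in-paper argument to compare yours against. Judged on its own, your two-step core is the right one and is essentially the standard argument behind such results: the simplicial observation that $\cone(D')\cap\lattice(D')=\intCone(D')$ for linearly independent $D'$ is correct, and the correction step is also sound --- every class of $\lattice(D)/\lattice(D_\ell)$ indeed has a representative $\vec{u}=\sum_j c_j\vec{v_j}$ with $0\le c_j$ below the order of $[\vec{v_j}]$, hence in $\intCone(D)$ and of norm polynomial in $\norm{D}$ (exponential in $d$ through $\Delta$), and sufficient depth lets you subtract it while staying in $\cone(D_\ell)$. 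In particular, for $d=2$, where after the paper's degenerate-case reductions the cone is simplicial and its facets are exactly those described by $\cF$, your argument is complete and gives a bound of the promised shape; this is the only case the paper actually uses.

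As a proof of the lemma as stated (arbitrary $d$), however, the inductive treatment of vectors that are deep in $\cone(D)$ but hug an \emph{internal} facet of their triangulation piece is only a sketch, and it has a genuine gap. You assert that such a $\vec{v}$ ``lies within a bounded distance of a vector of the $(d-1)$-dimensional cone $\cone(D'')$ that is genuinely deep there,'' but the inductive hypothesis applies only to points of $\lattice(D'')$ (or at least of the lattice generated by the vectors relevant to that facet), and you give no reason why the nearby deep point can be chosen in that lattice --- projecting onto the facet does not respect it, and fixing the residue class is exactly the same difficulty you are trying to resolve. Likewise, the closing ``one more bounded monoid correction'' is unexplained: after expressing the nearby facet point in $\intCone(D'')$, the bounded remainder $\vec{v}-\vec{w}$ is merely a bounded vector of $\lattice(D)$ (not even necessarily in $\cone(D)$), and bounded lattice points of the cone need not lie in $\intCone(D)$ --- that is precisely the phenomenon the lemma is about, so it cannot be absorbed without an argument (e.g., arranging $\vec{v}-\vec{w}$ to equal a pre-selected representative already known to be in $\intCone(D)$). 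These issues are repairable along the lines of the literature the cited lemma rests on, but as written the general-$d$ case is not established; only the $d\le 2$ case is.
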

Since in our setting the dimension is $d=2$, and as mentioned above, $\norm{\cF}=\norm{\cD}$, we have that $M$ in \cref{lem:ditc} is polynomial in $\norm{D}$.

Our second tool is the classical \emph{Steinitz Lemma}.
\begin{lemma}[Steinitz, as stated in~\cite{grinberg1980value}]
\label{lem:steinitz}
    Let $\vec{v_1},\dots,\vec{v_k}$ be a non-empty sequence of vectors in $\bbR^d$.
    Let $\vec{v}=\sum_{j=1}^k \vec{v_j}$ and $I=\max_{1\le j\le k}\norm{\vec{v_j}}_\infty$. There exists a permutation $\sigma$ of $\{1,\dots,k\}$ such that for every $n\in\{d,\dots,k\}$, we have:
    \[\norm{\sum_{j=1}^n \vec{v_{\sigma(j)}} - \frac{n-d}{k}\vec{v}}_\infty \le d\cdot I\]
\end{lemma}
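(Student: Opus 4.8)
The plan is to prove the statement by a greedy selection argument, maintaining an invariant that controls the distance of each partial sum from the ``ideal line'' $t \mapsto \frac{t}{k}\vec{v}$. The key geometric idea is the following: at each step, if we have already placed some subset $S$ of the vectors, the partial sum $\sum_{j \in S}\vec{v_j}$ is ``close'' to $\frac{|S|}{k}\vec{v}$ precisely when the residual vector $\sum_{j \notin S}\vec{v_j} - \frac{k-|S|}{k}\vec{v}$ is small; and among the remaining vectors one can always find one whose addition keeps the residual small. The standard tool to formalize ``one can always find a good continuation'' is a linear-algebraic lemma about choosing a point of a polytope that lies in a low-dimensional face, obtained via Carathéodory-type reasoning or an LP-rounding / basic-feasible-solution argument.

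Concretely, I would proceed as follows. First, set $\vec{w_j} = \vec{v_j} - \frac{1}{k}\vec{v}$ for each $j$, so that $\sum_{j=1}^k \vec{w_j} = \vec{0}$ and $\norm{\vec{w_j}}_\infty \le 2I$ — actually one keeps the original normalization and tracks $\norm{\sum_{j\in\text{prefix}}\vec{v_j} - \frac{n}{k}\vec{v}}$ directly; the shifted vectors $\vec{w_j}$ just make the target partial sums all zero. The goal becomes: order the $\vec{w_j}$ so that every prefix sum has infinity-norm at most $d\cdot I$ (after absorbing the $-\frac{d}{k}\vec{v}$ shift, which I handle by noting $\norm{\frac dk \vec v}_\infty \le d I$ as well, or by the cleaner variant where one bounds prefix sums of the $\vec{w_j}$ by $dI$ for indices $\ge d$). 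Then build the permutation greedily \emph{from the end}: suppose $\vec{w_{\sigma(n+1)}},\dots,\vec{w_{\sigma(k)}}$ are already chosen from a set $R$ with $|R| = k-n$ remaining; I want to pick $\sigma(n) \in R$ so that the new suffix sum stays controlled. Equivalently, among the remaining multiset $R$ I want a subset of size $n$ (what will become the prefix) whose sum $\vec{z}$ satisfies $\norm{\vec z}_\infty \le$ bound, \emph{and} such that removing one well-chosen element from that subset still leaves a sum within the bound — this is where the combinatorial care is needed.

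The cleanest route, which I would follow, is the classical one: maintain, for each step $n$ from $k$ down to $1$, a fractional certificate — nonnegative reals $\mu_j$ for $j$ in the remaining set $R_n$ with $\sum_{j\in R_n}\mu_j = n$ and $\sum_{j\in R_n}\mu_j \vec{w_j} = \vec{0}$ (such $\mu$ exists because $\mu_j \equiv n/|R_n|$ works, using $\sum_{j\in R_n}\vec w_j$ need not be zero though — so more precisely one carries the invariant that the current suffix sum equals some small vector and the $\mu_j$ witness that the remaining vectors can be fractionally split). By a basic-feasible-solution / vertex argument on this system of $d+1$ linear equations, there is such a $\mu$ with at most $d+1$ coordinates strictly between $0$ and their max, hence at least $|R_n| - (d+1)$ coordinates are ``saturated'', and one of the saturated-to-zero (or the structure of the vertex) identifies an index $j^\ast$ that can safely be moved into the prefix; adding $\vec{w_{j^\ast}}$ changes the running prefix sum by at most $\norm{\vec w_{j^\ast}}_\infty$, and the invariant (prefix sum $= $ something of norm $\le dI$) is re-established because the fractional certificate guarantees the ``mass'' being committed is within one vector of the ideal. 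Telescoping these bounds over all steps with $n \ge d$ yields the claimed $\norm{\sum_{j=1}^n \vec{v_{\sigma(j)}} - \frac{n-d}{k}\vec v}_\infty \le d \cdot I$.

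The main obstacle, and the step requiring the most care, is the linear-algebra lemma that guarantees at each stage a ``roundable'' vertex of the residual polytope — i.e., that the system (sum of coefficients $= n$; weighted sum of the $d$-dimensional vectors $= $ current residual) always has a basic feasible solution with only $O(d)$ fractional coordinates, and that rounding one such coordinate advances the construction while keeping the partial sum within $d\cdot I$. Getting the constant exactly $d$ (rather than $d+1$ or $2d$) requires the sharp form of this argument — tracking that the $-\frac{d}{k}\vec{v}$ offset in the statement is precisely what absorbs the slack from the $d$-dimensional basis — and this bookkeeping is the delicate part; everything else is telescoping and triangle inequalities.
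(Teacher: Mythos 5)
The paper never proves this lemma at all: it is imported as a black box from the cited Grinberg--Sevast'yanov reference, so the only meaningful comparison is between your plan and the classical certificate argument. Your plan has the right skeleton (nested sets built downward, a fractional certificate, vertices/basic solutions with few fractional coordinates, a final triangle inequality), but the invariant is stated imprecisely and the step you yourself flag as ``the main obstacle'' is missing, and the version of it you gesture at is false. The correct invariant is: for each $n$ from $k$ down to $d$ there is a set $A_n$ of size $n$ (the eventual prefix) and coefficients $\lambda_i\in[0,1]$, $i\in A_n$, with $\sum_{i\in A_n}\lambda_i = n-d$ (not $n$, as you write) and $\sum_{i\in A_n}\lambda_i\vec{v_i}=\frac{n-d}{k}\vec{v}$. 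The bound then needs no normalization and loses no factor $2$: $\sum_{i\in A_n}\vec{v_i}-\frac{n-d}{k}\vec{v}=\sum_{i\in A_n}(1-\lambda_i)\vec{v_i}$, and the weights $1-\lambda_i$ lie in $[0,1]$ and sum to exactly $d$, which is precisely why the offset $\frac{n-d}{k}\vec{v}$ appears in the statement and why the upper bounds $\lambda_i\le 1$ are indispensable. Your detour through shifted vectors with $\norm{\vec{w_j}}\le 2I$ is a dead end you abandoned but never replaced with this identity.

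The genuine gap is the descent step $n\to n-1$. A vertex of $\bigl\{\lambda\in[0,1]^{A_n}:\ \sum_i\lambda_i=n-d,\ \sum_i\lambda_i\vec{v_i}=\frac{n-d}{k}\vec{v}\bigr\}$ need not have any zero (or otherwise ``roundable'') coordinate — already for $d=1$, $n=3$, $v_1=v_2=1$, $v_3=-2$ the vertices are $(1,\tfrac13,\tfrac23)$ and $(\tfrac13,1,\tfrac23)$ — so ``a saturated coordinate identifies $j^*$'' does not work as stated; moreover removing an element changes both right-hand sides, which your sketch does not account for. The classical fix: over the same $A_n$, pass to the polytope with coefficient sum $n-1-d$ and target $\frac{n-1-d}{k}\vec{v}$; it is nonempty because scaling any point of the previous certificate by $\frac{n-1-d}{n-d}$ scales both constraints correctly, and a counting argument shows every vertex of it has a coordinate equal to $0$: at most $d+1$ coordinates are fractional, the number of ones is at most $n-1-d$, and if it equals $n-1-d$ there are no fractional coordinates at all, so in either case at least one coordinate is $0$. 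Dropping that element gives $A_{n-1}$ with its certificate, and the downward induction closes. Without this (or an equivalent) argument, your proposal is a correct outline of the known proof, but not a proof.
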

In the context of \dVAS{2}, this lemma states that if a vector $\vec{v}$ is reachable via some path $\pi$, then the vectors of $\pi$ can be rearranged so that the resulting path, dubbed a \emph{Steinitz path} does not stray too far from the straight line that connects the origin to $\vec{v}$. Specifically, the guaranteed bound (dubbed the \emph{Steinitz constant}) is $d\cdot I=2\norm{\cV}$. 
The form of Steinitz paths guarantees in particular that their drop and peak are not too large, which is particularly useful for us. 
More precisely, we have the following (see \cref{apx: drop and peak of steinitz} for the proof).
\begin{lemma}
    \label{lem: drop and peak of steinitz}
    Consider a Steinitz path $\pi$ with $\eff(\pi)=(x,y)\ge (0,0)$, then $\drop_\xc(\pi),\drop_\yc(\pi)\le 2\norm{\cV}$ and $\peak_\xc(\pi)\le x+2\norm{\cV}$ and $\peak_\yc(\pi)\le y+2\norm{\cV}$.
\end{lemma}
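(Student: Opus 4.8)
The plan is to simply unwind the definition of a Steinitz path and read the four bounds off coordinate-wise. Write $k=|\pi|$ and recall that $\pi$ being a Steinitz path with $\eff(\pi)=(x,y)$ means (by \cref{lem:steinitz} applied with $d=2$, noting that every vector of $\cV$ has infinity-norm at most $\norm{\cV}$, so the Steinitz constant is $2\norm{\cV}$) that for every prefix length $n\in\{2,\dots,k\}$ we have $\norm{\eff(\pi[1,n]) - \tfrac{n-2}{k}(x,y)}\le 2\norm{\cV}$. Restricting to the $\xc$-coordinate this gives
\[
  \tfrac{n-2}{k}\,x - 2\norm{\cV} \;\le\; \effx(\pi[1,n]) \;\le\; \tfrac{n-2}{k}\,x + 2\norm{\cV}
  \qquad\text{for all } 2 \le n \le k,
\]
and analogously for $\yc$ with $y$ in place of $x$; here I use that $0\le \tfrac{n-2}{k}\le 1$ in this range.

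For the drop bounds I would argue that, since $\drop_\xc(\pi)=\bigl|\min_{0\le j\le k}\effx(\pi[1,j])\bigr|$, it suffices to show $\min_{0\le j\le k}\effx(\pi[1,j]) \ge -2\norm{\cV}$. For $j=0$ the prefix sum is $0$; for $j=1$ it is the $\xc$-entry of a single vector of $\cV$, hence at least $-\norm{\cV}\ge -2\norm{\cV}$; and for $2\le j\le k$ the left inequality above together with $x\ge 0$ and $\tfrac{j-2}{k}\ge 0$ gives $\effx(\pi[1,j])\ge -2\norm{\cV}$. Thus $\drop_\xc(\pi)\le 2\norm{\cV}$, and $\drop_\yc(\pi)\le 2\norm{\cV}$ follows symmetrically. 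For the peak bounds, $\peak_\xc(\pi)=\max_{0\le j\le k}\effx(\pi[1,j])$: the $j=0$ term is $0\le x+2\norm{\cV}$, the $j=1$ term is at most $\norm{\cV}\le x+2\norm{\cV}$, and for $2\le j\le k$ the right inequality above with $\tfrac{j-2}{k}\le 1$ and $x\ge 0$ yields $\effx(\pi[1,j])\le x+2\norm{\cV}$; hence $\peak_\xc(\pi)\le x+2\norm{\cV}$, and $\peak_\yc(\pi)\le y+2\norm{\cV}$ symmetrically.

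This is essentially the whole argument; there is no real obstacle, only one bookkeeping wrinkle to be careful about: \cref{lem:steinitz} only controls prefixes of length $n\ge d=2$, so the length-$0$ and length-$1$ prefixes have to be treated separately, as done above (they are $\vec{0}$ and a single generator, both trivially within the bounds), and the degenerate cases $k\le 1$ are handled by inspection.
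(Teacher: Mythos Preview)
Your proof is correct and follows essentially the same approach as the paper: extract the two-sided coordinate bound from the Steinitz guarantee for prefixes of length $n\ge 2$, and handle the length-$0$ and length-$1$ prefixes by the trivial single-vector bound. If anything, your write-up is slightly tidier, since you state the two-sided inequality once and read both the drop and peak bounds off it, whereas the paper splits into cases according to the sign of $\effx(\pi[1,i])-\tfrac{i-2}{|\pi|}x$.
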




\subsection{Proof of \cref{thm:main:reach coincides with box reach from W}}
We are now ready for the main result, which we prove in the remainder of this section.
\reachandboxcoincide*

For $W\in \bbN$, denote the ``L-shape'' $\lshape=([0,W-1]\times \bbN) \cup (\bbN\times [0,W-1])$ and its complement $\clshape=[W,\infty]^2$. Also denote the real positive quadrant by $\posquad$.
Note that $\boxreach{\cV} \subseteq \reach{\cV}$ trivially holds, and this inclusion remains valid when restricted to $\clshape$. Therefore, it suffices to demonstrate the reverse inclusion: $\reach{\cV}\cap \clshape \subseteq \boxreach{\cV}\cap \clshape$.

Fix a $\dVAS{2}$ $\cV=\{\vec{v_1},\dots,\vec{v_k}\}$, and denote by $M$ the Deep-in-the-Cone constant guaranteed by \cref{lem:ditc}.
We start by filtering out some degenerate cases, which would allow us to make some simplifying assumptions on $\cV$.

We can assume without loss of generality that not all vectors in $\cV$ are of the form $(\leq 0, \leq 0)$ (i.e., $(x,y)$ such that $x\le 0,y\le 0$). Indeed, if this were the case, then $\reach{\cV}=\{\vec{0}\}$ and the theorem trivially holds. Similarly, it cannot hold that every vector $(x,y)\in \cV$ has a negative coordinate, as again we would have $\reach{\cV}=\{\vec{0}\}$.
Our next assumption is that the vectors in $\cV$ are not all linearly dependent. Indeed, if this is the case then $\reach{\cV}$ is one-dimensional, and the setting is simpler: in this case all vectors in $\cV$ are scalar multiplications of some vector $(x,y)\in \bbN^2$. Then, reachability and box-reachability can be reasoned about in one of the components, so the setting is that of $\dVAS{1}$.
We handle this in \cref{apx: the 1 VAS case}.

Having set these assumptions, our first step is to find a vector reachable in $\cV$ that has strictly positive coordinates, i.e., of the form \((>0, >0)\), and that is box reachable. 
Note that if $\cV$ has a single generating vector of the form $(>0,>0)$, then it satisfies this requirement.
If no generator of $\cV$ is of this form, then there must be vectors in $\cV$ of the form $(>0, 0)$ or $(0, >0)$ (otherwise we have $\reach{\cV}=\{\vec{0}\}$ as shown above). 
Without loss of generality, assume there is a vector \(\vec{u_1} = (x,0)\) where $x>0$. Since not all vectors in $\cV$ are linearly dependent, there exists a vector $\vec{u_2}=(x',y')$ with $y'>0$. Indeed, otherwise all such vectors have $y'<0$), so they cannot be taken after $(x,0)$, and can therefore be discarded from $\cV$ without changing the reachability set.

Since we assume there is no $(>0,>0)$ vector in $\cV$, it follows that $x'\le 0$. We now define $\vec{s}=(-2x'+1)\vec{u_1}+\vec{u_2}=((-2x'+1)x+x',y')=(x+x'(1-2x),y')$, and notice that $x+x'(1-2x)>0$, since $1-2x< 0$ and $x'\le 0$. Thus, $\vec{s}$ is of the form $(>0,>0)$. Moreover, we claim that $\vec{s}$ is box-reachable. 
Indeed, consider the path $\zeta=(x,0)^{-x'}(x',y')(x,0)^{-x'+1}$ then $\eff(\zeta)=\vec{s}$. Also, $\drop_\xc(\zeta)=\drop_\yc(\zeta)=0$ since the only negative coordinate is $x'$, and that is taken only after $(x,0)^{-x'}$. Finally, $\peak_\yc(\zeta)=y'=\effy(\zeta)$ and $\peak_\xc(\zeta)=-x'x+x'+(-x'x)=\effx(\zeta)$. 

We now define $\spos=2\norm{\cV}\vec{s}$.
Notice that $\norm{\spos}=2\norm{\cV}\norm{((-2x'+1)x+x',y')}\le 2\norm{\cV}(2\norm{\cV}^2+2\norm{\cV})\le 8\norm{\cV}^3$. That is, the representation of $\spos$ is polynomial in $\norm{\cV}$. Moreover, we have $\vec{s}\ge (1,1)$ as it is positive and integer. Therefore, $\spos\ge (2\norm{\cV},2\norm{\cV})$. Also, since $\vec{s}$ is box-reachable, so is $\spos$. We summarize the properties of $\spos$ for later reference.
\begin{observation}
    \label{obs:spos norm} 
    $\spos$ is box-reachable in $\cV$, satisfies $\spos\ge (2\norm{\cV},2\norm{\cV})$ and 
    $\norm{\spos}\le 8\norm{\cV}^3$.
\end{observation}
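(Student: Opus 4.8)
The statement records three properties of $\spos=2\norm{\cV}\,\vec{s}$ that all fall out of the construction of $\vec{s}$ and of the path $\zeta$ performed just above, so the plan is to collect them one at a time. For box-reachability, recall that $\eff(\zeta)=\vec{s}$ together with $\drop_\xc(\zeta)=\drop_\yc(\zeta)=0$, $\peak_\xc(\zeta)=\effx(\zeta)$ and $\peak_\yc(\zeta)=\effy(\zeta)$; equivalently, every prefix $\zeta[1,i]$ satisfies $\vec{0}\le\eff(\zeta[1,i])\le\vec{s}$, which is exactly $\vec{0}\Bleads{\zeta}\vec{s}$. The one thing worth stating explicitly is that concatenating a box-safe path with itself remains box-safe: if $\vec{0}\Bleads{\pi}\vec{v}$ then $\vec{0}\Bleads{\pi^m}m\vec{v}$ for every $m\ge 1$, since after $i$ complete copies of $\pi$ the trace is at $i\vec{v}$, and during the $(i+1)$-th copy it stays in $i\vec{v}+\{\vec{w}:\vec{0}\le\vec{w}\le\vec{v}\}\subseteq\{\vec{w}:\vec{0}\le\vec{w}\le m\vec{v}\}$. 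Taking $\pi=\zeta$ and $m=2\norm{\cV}$ yields $\vec{0}\Bleads{\zeta^{2\norm{\cV}}}\spos$, so $\spos$ is box-reachable.

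For the two numeric bounds I would argue as follows. The vector $\vec{s}$ was shown to have the form $(>0,>0)$: its $\xc$-coordinate $x+x'(1-2x)$ is positive because $1-2x<0$ and $x'\le 0$, and its $\yc$-coordinate equals $y'>0$. Since $\vec{s}\in\bbN^2$, this forces $\vec{s}\ge(1,1)$, hence $\spos=2\norm{\cV}\,\vec{s}\ge(2\norm{\cV},2\norm{\cV})$. For the size, the definition $\norm{\cV}=d\sum_{\vec{v}\in\cV}\norm{\vec{v}}$ gives $|x|,|x'|,|y'|\le\norm{\cV}$, so $\norm{\vec{s}}=\max\{|(-2x'+1)x+x'|,\,|y'|\}\le(2\norm{\cV}+1)\norm{\cV}+\norm{\cV}=2\norm{\cV}^2+2\norm{\cV}$, and multiplying by $2\norm{\cV}$ gives $\norm{\spos}\le 4\norm{\cV}^3+4\norm{\cV}^2\le 8\norm{\cV}^3$ (using $\norm{\cV}\ge 1$).

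There is no real obstacle here: the observation is a bookkeeping summary, and the only point that is not completely immediate from the preceding construction is the closure of box-reachability under positive integer multiples, which is precisely what lets us upgrade box-reachability of $\vec{s}$ to box-reachability of $\spos$.
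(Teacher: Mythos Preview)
Your proposal is correct and follows essentially the same approach as the paper: the observation is a bookkeeping summary of the construction preceding it, and your derivations of the lower and upper bounds match the paper's almost verbatim. The only addition is that you spell out explicitly why box-reachability of $\vec{s}$ implies box-reachability of $\spos=2\norm{\cV}\,\vec{s}$ (via the closure of box-reachability under $\pi\mapsto\pi^m$), whereas the paper simply asserts ``since $\vec{s}$ is box-reachable, so is $\spos$''; this is a welcome clarification rather than a different route.
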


In the following we split our analysis to two cases, according to whether $\cone(\cV)$ contains the positive quadrant $\posquad$ or not. The cases are depicted in \cref{fig:cone and quadrant}. 
Specifically, since we assume that $\cone(\cV)$ is not one-dimensional (\cref{apx: the 1 VAS case}) or all of $\bbR^2$ (\cref{rmk:cone is R2}), we have that $\cone(\cV)=\cone(\vec{\chi_1},\vec{\chi_2})$, where $\vec{\chi_1},\vec{\chi_2}$ are the two extremal vectors whose cone spans all the vectors in $\cone(\cV)$, as discussed in \cref{sec:cones and lattices} (indeed, a nontrivial cone in $\bbR^2$ is spanned by at most two vectors). 

We can then formulate the different cases as follows: 
\begin{itemize}
    \item If $\vec{\chi_1}$ is of the form $(\le 0,\ge 0)$ and $\vec{\chi_2}$ is of the form $(\ge 0,\le 0)$, then $\posquad\subseteq \cone(\cV)$ (depicted in \cref{fig:cone contains quadrant}). 
    \item If $\vec{\chi_1},\vec{\chi_2}$ are both of the form $(\ge 0,\ge 0)$, then $\cone(\cV)\subseteq \posquad$ (depicted in \cref{fig:quadrant contains the cone}).
    \item If $\vec{\chi_1}$ is of the form $(> 0,>0)$ and $\vec{\chi_2}$ is of the form $(\bbZ ,\le 0)$ (or vice-versa), then $\posquad\cap \cone(\cV)\neq \emptyset$ (depicted in \cref{fig:cone contains x axis,fig:cone contains y axis}).
\end{itemize}
Importantly, there are no other cases to consider. Indeed, keeping in mind that the angle between $\vec{\chi_1}$ and $\vec{\chi_2}$ is at most $180^\circ$ (since the cone comprises positive combinations), a simple examination of the other configurations of $\vec{\chi_1}$ and $\vec{\chi_2}$ shows that they lead to either $\cone(\cV)\cap \posquad=\{0\}$ 
or to a one dimensional cone, which we already handled.

\begin{figure}[ht]
    \centering
    \captionsetup{justification=centering}
    \begin{subfigure}{0.24\linewidth}
        \begin{tikzpicture}[scale=0.4]
            \draw[fill=gray!40, color=gray!40] (0, 0)--(-2, 5)--(5,5)--(5, -2)--(0, 0);
            \draw[->, thick] (-2,0)--(5,0) node[right]{$x$};
            \draw[->, thick] (0,-2)--(0,5) node[above]{$y$};
            \draw[ultra thick,color=black] (0, 0)--(-2, 5);
            \draw[->,thick,color=black] (0, 0)->(-1, 2.5) node[left, pos=0.6] {$\vec{\chi_1}$};
            \draw[ultra thick,color=black] (0, 0)--(5,-2);
            \draw[->,thick,color=black] (0, 0)->(2.5, -1) node[below, pos=0.6] {$\vec{\chi_2}$};
        \end{tikzpicture}
        \caption{}
        \label{fig:cone contains quadrant}
    \end{subfigure}
    \begin{subfigure}{0.24\linewidth}
        \begin{tikzpicture}[scale=0.4]
            \draw[fill=gray!40, color=gray!40] (0, 0)--(2, 5)--(5,5)--(5, 3)--(0, 0);
            \draw[->, thick] (-2,0)--(5,0) node[right]{$x$};
            \draw[->, thick] (0,-2)--(0,5) node[above]{$y$};
            \draw[ultra thick,color=black] (0, 0)--(2, 5);
            \draw[->,thick,color=black] (0, 0)->(1, 2.5) node[pos=1.3] {$\vec{\chi_1}\quad\:\ $};
            \draw[ultra thick,color=black] (0, 0)--(5,3);
            \draw[->,thick,color=black] (0, 0)->(2.5, 1.5) node[below, pos=0.8] {$\quad\vec{\chi_2}$};
        \end{tikzpicture}
        \caption{}
        \label{fig:quadrant contains the cone}
    \end{subfigure}
    \begin{subfigure}{0.24\linewidth}
        \begin{tikzpicture}[scale=0.4]
            \draw[fill=gray!40, color=gray!40] (0, 0)--(-2, 5)--(5,5)--(5, 3)--(0, 0);
            \draw[->, thick] (-2,0)--(5,0) node[right]{$x$};
            \draw[->, thick] (0,-2)--(0,5) node[above]{$y$};
            \draw[ultra thick,color=black] (0, 0)--(-2, 5);
            \draw[->,thick,color=black] (0, 0)->(-1, 2.5) node[left, pos=0.6] {$\vec{\chi_2}$};
            \draw[ultra thick,color=black] (0, 0)--(5,3);
            \draw[->,thick,color=black] (0, 0)->(2.5, 1.5) node[below, pos=0.8] {$\quad\vec{\chi_1}$};
        \end{tikzpicture}
        \caption{}
        \label{fig:cone contains y axis}
    \end{subfigure}
    \begin{subfigure}{0.24\linewidth}
        \begin{tikzpicture}[scale=0.4]
            \draw[fill=gray!40, color=gray!40] (0, 0)--(2, 5)--(5,5)--(5, -2)--(0, 0);
            \draw[->, thick] (-2,0)--(5,0) node[right]{$x$};
            \draw[->, thick] (0,-2)--(0,5) node[above]{$y$};
            \draw[ultra thick,color=black] (0, 0)--(2, 5);
            \draw[->,thick,color=black] (0, 0)->(1, 2.5) node[pos=1.3] {$\vec{\chi_1}\quad\:\ $};
            \draw[ultra thick,color=black] (0, 0)--(5,-2);
            \draw[->,thick,color=black] (0, 0)->(2.5, -1) node[below, pos=0.6] {$\vec{\chi_2}$};
        \end{tikzpicture}
        \caption{}
        \label{fig:cone contains x axis}
    \end{subfigure}
    \caption{Possible overlaps of the cone and the first quadrant. In \cref{fig:cone contains quadrant} the cone contains the positive quadrant, while in \cref{fig:quadrant contains the cone,fig:cone contains y axis,fig:cone contains x axis} it does not. Note that Cases (c) and (d) may intersect the \emph{negative} quadrant, even though this is not explicitly depicted.}
    \label{fig:cone and quadrant}
\end{figure}
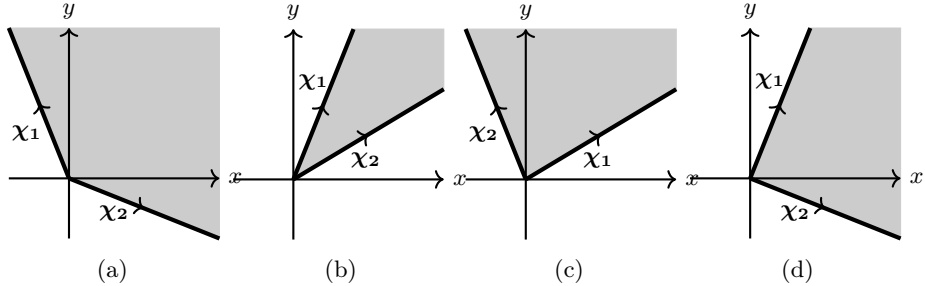

\subsubsection{First Case: The Cone Contains the Positive Quadrant}
\label{sec: cone contains first quadrant}
We present a sketch of the proof in this case (see \cref{fig:first case cone contains pos quad} for a depiction). The full details appear in \cref{apx:proof of first case}.

Recall the box-reachable, strictly positive vector $\spos$ in $\cV$ (as per \cref{obs:spos norm}). We choose $W$ large enough so that any target $\vec{v}\in \clshape$ is so far from the axes that $\vec{v}-2\spos$ is $M$-deep-in-the-cone (for the appropriate $M$ obtained from \cref{lem:ditc}). Note that this relies on the fact that the cone contains the first quadrant, so that being deep in the first quadrant also implies being deep-in-the-cone.
Intuitively, the vector $\vec{v}-2\spos$ captures the fact that we consider a path to $\vec{v}$ starting and ending with $\spos$.

Assume that $\vec{v}$ is $\bbN$-reachable in $\cV$. In particular, $\vec{v}\in \intCone(\cV)$. By adding $-2\spos$ to $\vec{v}$ (note the negative coefficient $-2$) we have that $\vec{v}-2\spos\in \lattice(\cV)$. 
We can now invoke \cref{lem:ditc} and get that $\vec{v}-2\spos\in \intCone(\cV)$.
Then, we obtain a Steinitz path $\rho$ with effect $\vec{v}-2\spos$ as per \cref{lem:steinitz}. By \cref{lem: drop and peak of steinitz} and our careful choice of $\spos$, we can show that the path $\spos\cdot \rho\cdot\spos$ box reaches $\vec{v}$. Intuitively, this is because $\spos$ gets us far enough from the origin to overcome the Steinitz corridor of $\rho$, and since the effect of $\spos\cdot \rho$ is $\vec{v}-\spos$, then this corridor also does not exceed $\vec{v}$, allowing us to complete the path with another $\spos$.
\begin{remark}
    \label{rmk:cone is R2}
    If $\cone(\cV)=\bbR^2$, the H-representation has $\cF=\emptyset$. Thus, all vectors are deep-in-the-cone, and by \cref{lem:ditc} we have $\lattice(\cV)=\intCone(\cV)$. Therefore, this case is also covered by the analysis above.
    Similarly, if $\cone(\cV)$ is a single half-space (e.g., spanned by $(1,0),(-1,0)$, and $(0,1)$) then $\cF$ is a singleton, and the argument proceeds similarly.
\end{remark}

\begin{figure}[ht]
\centering
\begin{minipage}{.45\textwidth}
  \centering
\begin{tikzpicture}[scale=0.6]
    \draw[->] (-1.5,0) -- (8,0) node[right] {$\xc$};
    \draw[->] (0,-0.5) -- (0,8) node[above] {$\yc$};

    \draw[-, line width=0.6mm, red] (0,0) -- (-1.5,8) node[midway, above left] {$\vec{\chi_2}$};
    \draw[-, line width=0.6mm, red] (0,0) -- (8,-1) node[pos=0.6, below right] {$\vec{\chi_1}$};

    \def\W{5.6}
    \fill [orange!25!white] (\W,\W) rectangle (8,8);
    \draw[-, thick] (\W,\W) -- (8,\W);
    \draw[-, thick] (\W,\W) -- (\W,8);
    \draw[<->, thick] (0,\W+1.4) -- (\W,\W+1.4) node[midway, above] {$W$};
    \draw[<->, thick] (\W+1.4,0) -- (\W+1.4,\W) node[midway, right] {$W$};

    \coordinate (O) at (0,0);
    \def\sx{1.2}
    \def\sy{1.5}
    \coordinate (s) at (\sx,\sy);
    \def\vx{7}
    \def\vy{6}
    \coordinate (v) at (\vx,\vy); 
    \coordinate (vmins) at ($(v)-(s)$); 

    \def\thick{1.4} 

    \pgfmathsetmacro{\L}{sqrt((\vx-\sx-\sx)^2+(\vy-\sy-\sy)^2)}
    \pgfmathsetmacro{\angle}{atan2(\vy-\sy-\sy, \vx-\sx-\sx)}

    \draw[->, thick, blue] (O) -- (s) node[pos=0.8,left] {$\spos$};
    \draw[->, thick, blue] (vmins) -- (v) node[pos=0.2, right] {$\spos$};

    \filldraw[black] (v) circle (2pt);
    \node[above right] at (v) {$\vec{v}$};

    \begin{scope}[shift={(s)}, rotate=\angle]
        \draw[thick, black] (0,-\thick) rectangle (\L,\thick);
    
        \draw[->, thick, green!75!black, line join=round, line cap=round]
            plot[smooth] coordinates { 
                (0,0)
                (0.2*\L, 0.7*\thick)
                (0.2*\L, -0.9*\thick)
                (0.4*\L, 0.7*\thick)
                (0.5*\L, -0.5*\thick)
                (0.8*\L, 0.9*\thick)
                (0.85*\L, -0.3*\thick)
                (\L,0)
            };
    \end{scope}
\end{tikzpicture}
    \caption{The cone contains $\posquad$. The faces of the cone are depicted in red. The target $\vec{v}$ is inside $\clshape$ (highlighted orange). The blue vectors are $\spos$, and are connected by the green Steinitz path, surrounded by its bounding ``corridor''.}
    \label{fig:first case cone contains pos quad}
\end{minipage}%
\qquad
\begin{minipage}{.45\textwidth}
  \centering
\begin{tikzpicture}[scale=0.6]
    \draw[->] (-1.5,0) -- (8,0) node[right] {$\xc$};
    \draw[->] (0,-0.5) -- (0,8) node[above] {$\yc$};

    \def\W{5.6}
    \fill [orange!25!white] (\W,\W) rectangle (8,8);
    \draw[-, thick] (\W,\W) -- (8,\W);
    \draw[-, thick] (\W,\W) -- (\W,8);
    \draw[<->, thick] (0,\W+1.4) -- (\W,\W+1.4) node[midway, above] {$W$};
    \draw[<->, thick] (\W+1.4,0) -- (\W+1.4,\W) node[pos=0.25, right] {$W$};

    \coordinate (O) at (0,0);
    \def\sx{8/6}
    \def\sy{3/6}
    \coordinate (s) at (\sx,\sy);
    \coordinate (s2) at (2*\sx,2*\sy);
    \def\vx{7}
    \def\vy{6}
    \coordinate (v) at (\vx,\vy); 
    \coordinate (vmins) at ($(v)-(s)$);
    \coordinate (vmins2) at ($(v)-(s)-(s)$);

    \def\thick{1.4} 

    \pgfmathsetmacro{\L}{sqrt((\vx-4*\sx)^2+(\vy-4*\sy)^2)}
    \pgfmathsetmacro{\angle}{atan2(\vy-4*\sy, \vx-4*\sx)}

    \filldraw[black] (v) circle (2pt);
    \node[above right] at (v) {$\vec{v}$};

\def\nLines{4}
\coordinate (cap) at (-81/137,432/137);

\foreach \i in {1,...,\nLines} {
    \pgfmathsetmacro{\frac}{\i / (\nLines+1)}
    
    \coordinate (start) at ($(0,0) !\frac! (cap)$);  
    \coordinate (end) at ($(8,3) !\frac! (v)$);        
    
    \draw[dashed, black] (start) -- (end);
}

\draw[-, line width=0.6mm, red] (0,0) -- (-1.5,8) node[midway, above left] {$\vhigh$};
 \draw[-, line width=0.6mm, red] (0,0) -- (8,3) node[pos=0.7, below right] {$\vlow$};

    \draw[->, thick, blue] (O) -- (s) node[midway, left] {};
    \draw[->, thick, blue] (s) -- (s2) node[midway, left] {};
    \draw[->, thick, blue] (vmins) -- (v) node[midway, right] {};
    \draw[->, thick, blue] (vmins2) -- (vmins) node[midway, right] {};

    \begin{scope}[shift={(s2)}, rotate=\angle]
        \draw[-,thick ,dashed] (0,-\thick)--(0,0) {};
        
        \draw[-,thick] (\L,-\thick)--(\L,\thick) {};
        
        \draw[-,thick] (0,0)--(0,\thick) {};

        \draw[-,thick] (0,\thick)--(\L,\thick) {};

        \draw[-,thick,dashed] (0,-\thick)--(1.2,-\thick) {};

        \draw[-,thick] (1.35,-\thick)--(\L,-\thick) {};
        
        \draw[->, thick, green!75!black, line join=round, line cap=round]
    plot[smooth] coordinates { 
        (0,0)                     
        (0.15*\L, 0.16*\thick)     
        (0.3*\L, -0.34)            
        (0.45*\L, 0.43*\thick)    
        (0.6*\L, 0.1)            
        (0.75*\L, 0.7*\thick)    
        (\L,0)                   
    };
    \end{scope}
\end{tikzpicture}
    \caption{The cone intersects $\posquad$. The faces of the cone are depicted in red. The target $\vec{v}$ is inside $\clshape$ (highlighted orange). The blue vectors are $\spos$, and are connected by the green Steinitz path, surrounded by its bounding ``corridor''. The dashed lines represent the ``levels'' of dot product with $\flow$.}
    \label{fig:second case cone intersects pos quad}
\end{minipage}
\end{figure}

\subsubsection{Second Case: The Cone Does Not Contain the Positive Quadrant}
\label{sec: cone does not contain quadrant}
We now consider the setting where $\cone(\cV)$ does not contain $\posquad$. Since $\spos\in \cone(\cV)$, then $\cone(\cV)\cap \posquad\supsetneq \{\vec{0}\}$, and therefore we fall into one of the cases depicted in \cref{fig:cone contains x axis,fig:cone contains y axis,fig:quadrant contains the cone}.
Note, however, that case \cref{fig:quadrant contains the cone} is trivial, as every vector in $\cV$ has non-negative coordinates, so $\reach{\cV} = \boxreach{\cV}$. We are therefore left with the cases where $\cone(\cV)=\cone(\vec{\chi_1},\vec{\chi_2})$ neither contains or is contained in the first quadrant. That is, we have that $\vec{\chi_1}$ and $\vec{\chi_2}$ are either of the form $(>0,>0) $ and $(\bbZ,\le 0)$ (\cref{fig:cone contains x axis}), or of the form $(>0,>0) $ and $(\le 0,\bbZ)$ (\cref{fig:cone contains y axis}).



We observe that these cases are symmetric in the sense that swapping the roles of the $\xc$ and $\yc$ axes interchanges between them. Therefore, it suffices to analyze subcase \ref{fig:cone contains y axis}, i.e., where $\cone(\cV)$ contains the $\yc$-axis.
We denote the extremal vectors as $\vlow=(x_1,y_1)$ with $x_1,y_1>0$ and $\vhigh=(x_2,y_2)$ 
with $x_2\le 0,y_2\in \bbZ$. The respective H-representation is then $\cF=\{\flow,\fhigh\}$ with $\flow=(-y_1,x_1)$ and $\fhigh=(y_2,-x_2)$.
We again sketch the proof, with the details in \cref{apx:proof of second case}.

Set $W$ large enough, and consider a vector $\vec{v}\in \reach{\cV}\cap \clshape$. We need to prove that $\vec{v}\in \boxreach{\cV}$. As before, we focus on $\vec{v}-2\spos$.

If we are lucky, then $\vec{v}-2\spos$ is $M$-deep-in-the-cone (spanned by $\cV$). In this case
we apply the same reasoning as in the first case: we first use the lattice-reachability of $\vec{v}-2\spos$ to conclude that $\vec{v}-2\spos\in \intCone(\cV)$. Then, we build the Steinitz path $\rho'$ from $\spos$ to $\vec{v}-\spos$, and finally show that the path $\spos\cdot \rho'\cdot \spos$ box-reaches $\vec{v}$. 

Unfortunately, unlike the first case, $\vec{v}-2\spos$ is not necessarily $M$-deep-in-the-cone. Indeed, the facet of the cone defined by 
$\vlow$ intersects $\clshape$, so there are reachable vectors that are on the facet itself, and are therefore not at all deep in the cone. Thus, we take a different approach (see \cref{fig:second case cone intersects pos quad} for a depiction).

If $\vec{v}-2\spos$ is not deep in the cone, we start by showing that it is close to the facet $\vlow$ (which is not completely trivial, since the depiction in \cref{fig:cone contains y axis} is misleading, in that $\vhigh$ could be in the negative quadrant).
We then notice that all vectors in $\cV$ are either parallel to $\vlow$, or  increase the inner product with $\fhigh$ by at least 1 (due to the integer coordinates). 
Thus, if $\vec{v}-2\spos$ is far enough in the quadrant but close to $\vlow$, reaching it requires a lot of vectors parallel to $\vlow$ (otherwise the path strays too far from $\vlow$, and cannot come back). We can use these vectors as a ``box-safe padding'', and use a Steinitz path in the middle, to obtain an overall box-reaching path. More precisely, we first advance along $\vlow$ until we are far enough above the $\xc$-axis, and far enough below $\vec{v}_\yc$. We then use a Steinitz path to simulate most of the original path to $\vec{v}$ apart from some more vectors parallel to $\vlow$, while staying well below $\vec{v}_y$. We then complete the path by staying parallel to $\vlow$ until reaching $\vec{v}$.
\hfill \qed

\section{On the Semilinearity of Box Reachability}
\label{sec:semilinearity}
\subsection{From Box Reachability in $\dVAS{d}$ to Reachability in $\dVAS{2d}$}
Having established the eventual-equivalence of $\boxreach{\cV}$ and $\reach{\cV}$ for $\dVAS{2}$ in $\clshape$, it is natural to ask what happens in $\lshape$, where the two sets do not coincide. In the following, we show that $\boxreach{\cV}$ is semilinear and in particular is semilinear in the restriction to $\lshape$. In fact, we show a more general result, linking box-reachability to standard reachability, at the cost of doubling the dimension (see \cref{apx:thm:d box reach to 2d reach}).
\begin{theorem}
\label{thm:d box reach to 2d reach}
Consider a $\dVAS{d}$ $\cV$, and define the $\dVAS{2d}$ $\cV'$ as 
\[
\cV' = \{(x_1,\ldots,x_d,-x_1,\ldots,-x_d) \mid (x_1,\ldots,x_d) \in \cV\} \cup \{\vec{e_{d+i}} \mid 1 \leq i \leq d\}
\]
where $\vec{e_{d+i}}$ is the unit vector with $1$ at coordinate $d+i$. Then
\[
\boxreach{\cV} = \{(v_1,\dots,v_d) \in \mathbb{N}^d \mid (v_1,\dots,v_d,0,\dots,0) \in \reach{\cV'}\}.
\]
\end{theorem}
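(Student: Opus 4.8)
The plan is to prove the two inclusions by exhibiting a direct correspondence between box‑reaching runs of $\cV$ and ordinary runs of $\cV'$ that end in $(\vec v,\vec 0)$. The key structural observation, which I would state up front as a working invariant, is a conservation law in $\cV'$: the only vectors affecting a coordinate $d+i$ are the original‑type vector $(x_1,\dots,x_d,-x_1,\dots,-x_d)$ (contributing $-x_i$) and the unit vector $\vec{e_{d+i}}$ (contributing $+1$), while the only vectors affecting a coordinate $i\le d$ are the original‑type vectors (contributing $x_i$). Hence, along any run of $\cV'$ starting at $\vec 0$, if $p_i$ denotes the partial effect in coordinate $i$ of the original‑type steps taken so far and $q_i$ denotes the number of copies of $\vec{e_{d+i}}$ taken so far, then coordinate $i$ equals $p_i$ and coordinate $d+i$ equals $q_i-p_i$; moreover $q_i$ is non‑decreasing along the run.

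For the inclusion $\subseteq$: given $\vec v\in\boxreach{\cV}$ witnessed by a path $\pi$ with $\vec 0\Nleads{\pi}\vec v$ and every prefix effect $\le\vec v$, I would build the run of $\cV'$ that first fires $\vec{e_{d+i}}$ exactly $\vec v_i$ times for each $i$ (in any order), and then fires, in order, the original‑type vectors corresponding to the steps of $\pi$. During the unit‑vector prefix the first $d$ coordinates are $0$ and the last $d$ grow from $0$, so all coordinates are non‑negative. During the second part, by the invariant, coordinate $i$ is a prefix effect of $\pi$, hence $\ge 0$ since $\pi$ is a VAS run, and coordinate $d+i$ is $\vec v_i$ minus that prefix effect, hence $\ge 0$ precisely because $\pi$ is box‑reaching. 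The final configuration is $(\vec v,\vec v_1-\vec v_1,\dots,\vec v_d-\vec v_d)=(\vec v,\vec 0)$, so $(\vec v,\vec 0)\in\reach{\cV'}$.

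For the inclusion $\supseteq$: given a run $\tau$ of $\cV'$ from $\vec 0$ to $(\vec v,\vec 0)$, I would take the subsequence of original‑type steps of $\tau$, project it onto the first $d$ coordinates, and call the resulting path $\pi$ of $\cV$. By the invariant and the final configuration, $\eff(\pi)=\vec v$ and $q_i=\vec v_i$ at the end of $\tau$. Non‑negativity of coordinate $i$ along all prefixes of $\tau$ shows that every prefix effect of $\pi$ is $\ge\vec 0$, so $\pi$ is a genuine VAS run, i.e.\ $\vec 0\Nleads{\pi}\vec v$. For the box condition, fix a prefix of $\pi$ and look at the prefix of $\tau$ ending immediately after the corresponding original‑type step: non‑negativity of coordinate $d+i$ there gives (prefix effect of $\pi$ in coordinate $i$) $\le q_i\le\vec v_i$, the last inequality because $q_i$ is non‑decreasing and equals $\vec v_i$ at the end. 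Thus every prefix effect of $\pi$ is $\le\vec v$, so $\pi$ box‑reaches $\vec v$ and $\vec v\in\boxreach{\cV}$ (using $\vec v\in\bbN^d$ from the hypothesis).

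I do not expect a real obstacle here; the only delicate point is the bookkeeping in the $\supseteq$ direction, namely choosing at which prefix of $\tau$ to read off coordinate $i$ versus coordinate $d+i$ so that the bounds $0\le\eff(\pi[1,j])_i\le\vec v_i$ fall out — and making the conservation invariant explicit reduces this to routine verification. As a remark, combining this theorem with semilinearity of reachability sets of $\dVAS{m}$ for $m\le 5$ immediately yields that $\boxreach{\cV}$ is semilinear for every $\dVAS{2}$, and in particular on $\lshape$.
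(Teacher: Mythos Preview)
Your proposal is correct and follows essentially the same approach as the paper's proof: both directions use the conservation law that coordinate $i$ tracks the partial effect $p_i$ of the original-type steps while coordinate $d{+}i$ tracks $q_i-p_i$, and the $\subseteq$ direction in both proofs fires all unit vectors first and then replays $\pi$. The only cosmetic difference is in the $\supseteq$ direction: the paper first rearranges the run of $\cV'$ to move all $\vec{e_{d+j}}$ steps to the front (valid since these are non-negative), whereas you leave the run as is and instead exploit the monotonicity of $q_i$ to conclude $p_i\le q_i\le \vec v_i$ at each prefix; both arguments are equally short and yield the same conclusion.
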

By \cite{hopcroft1979reachability}, the reachability set of $\dVAS{4}$ is semilinear, and in particular the box-reachability set is also semilinear as a projection. By \cref{thm:d box reach to 2d reach} we therefore have:
\begin{corollary}
    \label{cor:box reach 2 VAS is semilinear}
    For a $\dVAS{2}$  $\cV$, the set $\boxreach{\cV}$ is semilinear.
\end{corollary}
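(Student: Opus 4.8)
The plan is to derive the corollary directly from \cref{thm:d box reach to 2d reach} together with the classical semilinearity of reachability sets in low dimension. First I would instantiate \cref{thm:d box reach to 2d reach} with $d = 2$. This yields a $\dVAS{4}$ $\cV'$ (obtained from $\cV$ by mirroring each generator into two fresh coordinates and adjoining the unit vectors $\vec{e_3}, \vec{e_4}$) such that
\[
\boxreach{\cV} = \{(v_1, v_2) \in \bbN^2 \mid (v_1, v_2, 0, 0) \in \reach{\cV'}\}.
\]
Since $\cV'$ has dimension $4 \le 5$, the Hopcroft--Pansiot theorem~\cite{hopcroft1979reachability} applies and gives that $\reach{\cV'} \subseteq \bbN^4$ is (effectively) semilinear.

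It then remains to observe that the right-hand side above is obtained from $\reach{\cV'}$ by the two most basic operations under which semilinear sets are closed. The set $\{(v_1, v_2, v_3, v_4) \in \bbN^4 \mid v_3 = v_4 = 0\}$ is linear, hence semilinear, so its intersection with $\reach{\cV'}$ is semilinear; and the coordinate projection of this intersection onto the first two components is exactly $\boxreach{\cV}$, which is therefore semilinear as well. (If desired, effectiveness follows too, since the construction of $\cV'$ is explicit, the semilinear representation of $\reach{\cV'}$ is computable, and intersection with a linear set and projection are effective on semilinear representations.)

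There is essentially no obstacle in this argument: all the substance lives in \cref{thm:d box reach to 2d reach} and in the cited semilinearity result, and the corollary is a one-step consequence modulo textbook closure properties of semilinear sets. The only thing worth verifying is that the dimension of $\cV'$ stays within the range ($\le 5$) for which reachability sets are known to be semilinear; since $2d = 4$ when $d = 2$, this holds.
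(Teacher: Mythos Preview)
Your proposal is correct and follows exactly the paper's approach: instantiate \cref{thm:d box reach to 2d reach} at $d=2$, invoke Hopcroft--Pansiot for the resulting $\dVAS{4}$, and conclude by intersection with $\{v_3=v_4=0\}$ and projection. The paper states this in one line (``the box-reachability set is also semilinear as a projection''); you have simply unpacked the closure properties explicitly.
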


\subsection{Box Reachability in \oVASS is Semilinear}
\label{sec:box reach in 1 VASS}
As demonstrated in \cref{xmp:2vass and 3vas}, \cref{thm:main:reach coincides with box reach from W} cannot be extended to either $\oVASS$ or $\dVAS{3}$. Still, it is desirable to reason about box-reachability in VASS. 
In this section, we show that the box-reachable set in \oVASS is semilinear.

The precise definitions and proofs are in~\cref{apx:box reach 1 VASS semilinear}. We sketch the main idea.
The fundamental tool we rely on is that for $\oVASS$, reachability can be characterized by \emph{Linear Path Schemes with One Cycle} (1-LPS). These are path schemes of the form $\alpha\beta^*\gamma$, where $\alpha,\gamma$ are paths, and $\beta$ is a cycle. We observe that crucially, while paths of this form might not be box-reaching, they cannot exceed their box by too much. Indeed, they can go outside their box only as much as $\gamma$ or $\beta$ do.

We therefore define the notion of a \emph{closing suffix} -- a (box reaching) path that can be concatenated after $\alpha\beta^*\gamma$ to make it box reaching (see \cref{fig:ell indices sketch}), and prove that we can characterize box reachability using 1-LPS with closing suffixes. We remark that this technique fails when LPS with more than one cycle are needed, e.g., for $\dVASS{2}$.
\begin{figure}[ht]
        \centering
        \begin{tikzpicture}[xscale=0.4,yscale=0.2]
     \draw[
         thick, blue,
         postaction={decorate}
     ]
     (0,4) -- (5,5) --  (2,3) -- (7,4) -- (4,2) -- (9,3) -- (6,1) -- (11,2) -- (8,0);

    \draw[fill,blue] (0,4) circle [radius=0.2] node[above] {};
    
     \draw[
         thick,
         rounded corners=5pt,
         decoration={
             markings,
             mark=between positions 0 and 1 step 0.05 with {\arrow[scale=0.7]{>}}
         },
         postaction={decorate}
     ]
     (0,0) -- (22,0) -- (22,1) -- (12,1) -- (12,2) -- (20,2) -- (20,3) -- (17,3) -- (17,4) -- (24,4) -- (24,5) -- (16,5) -- (16,6) -- (24,6);

    \foreach \x/\y/\label in {(0,0)/{},(24,6)/$\ell_1(x^\tar)$, (21,6)/$\ell_2$, (16,6)/$\ell_3$, (15,2)/$\ell_4$, (12,2)/$\ell_5$, (10,0)/$\ell_6$, (8,0)/$\ell_7$} {
        \draw[fill] \x circle [radius=0.2] node[above] {\label};
    }
    \end{tikzpicture}
        \caption{The black path is a box-reaching path to $x^\tar$ (the $\yc$ axis is only for readability). Each $\ell_i$ is the beginning of a possible closing suffix (and in particular these suffixes are box-reaching paths).
        The blue zig-zag is an LPS replacing the prefix up to $\ell_m$. Note that the LPS is not box-reaching itself, but its ``overshoot'' is covered by the long closing suffix.}
        \label{fig:ell indices sketch}
    \end{figure}
We remark that the above also gives an alternative proof of \cref{cor:box reach 2 VAS is semilinear} -- indeed, in $\lshape$ we can cast the $\dVAS{2}$ to a $\oVASS$ by using the state space to capture the bounded coordinates.
 
\section{Discussion and Future Work}
\label{sec:discussion}
In this work we define the notion of box-reachability. 
Intuitively, the set of box-reachable vectors exhibit a ``bounded'' behavior that is more relaxed than a fixed, or even an existentially-quantified bound on some of the coordinates~\cite{demri2013selective}.
Conceptually, this allows us to capture models where one wishes to reach a target without using more resources than the end goal.

We prove that despite the restriction this places on reachability, for $\dVAS{2}$ the set of box-reachable configurations coincides with standard reachability for all configurations beyond some threshold. 
As demonstrated in \cref{xmp:2vass and 3vas}, extending \cref{thm:main:reach coincides with box reach from W} is impossible for $\dVAS{3}$ and $\dVASS{1}$. 

We also show that box-reachability is semilinear for $\dVAS{2}$ and $\oVASS$. We leave open the question of whether this can be extended to $\dVAS{d}$ for $3\le d\le 5$ and $\dVASS{2}$, for which standard reachability is semilinear.

The introduction of box-reachability gives rise to several other natural definitions: one could consider \emph{approximate} box reachability, where the box is allowed to be greater than the target by some fixed (additive or multiplicative) constant. Naive definitions of such extensions turn out to be not very interesting, in the sense that using the Steinitz path, even without our tools in \cref{sec: reach and box reach coincide}, suffices to show that they coincide with reachability (since the Steinitz path already exceeds the target only by a little bit). 
Nonetheless, it is possible that specific modelling needs may require some novel tweaks of box-reachability notions, and we hope this research serves as a basis for reasoning about such extensions. 



\subsubsection*{Acknowledgments:}
We thank an anonymous referee for suggesting~\cref{thm:d box reach to 2d reach}, which greatly simplified our previous proof of \cref{cor:box reach 2 VAS is semilinear}.\\
S. Almagor is supported by the ISRAEL SCIENCE FOUNDATION (grant No. 989/22)

\bibliographystyle{splncs04}
\bibliography{ref}

\appendix
\section{Proofs}
\subsection{Proof of \cref{lem: drop and peak of steinitz}}
\label{apx: drop and peak of steinitz}
Let $\pi$ be a Steinitz path with $\eff(\pi)=(x,y)$. We prove the bounds for $\xc$, the bounds  for $\yc$ are obtained identically.

We start with $\drop_\xc(\pi)$.
Note that $\effx(\pi[1])\ge -2\norm{\cV}$, since the maximal effect of a single transition is $\norm{\cV}$. 
Consider $2\le i\le |\pi|$, then we can use the properties of $\pi$ as a Steinitz path. 

First, if $\effx(\pi[1,i])\ge \frac{i-2}{|\pi|}x$, then in particular $\effx(\pi[1,i])\ge 0\ge -2\norm{\cV}$ and we are done.
Otherwise, we have 
\[
\frac{i-2}{|\pi|}x- \effx(\pi[1,i])\le 2\norm{\cV} \implies \effx(\pi[1,i])\ge -2\norm{\cV}. 
\]
so we have $\drop_\xc(\pi)\le 2\norm{\cV}$. 

We proceed with $\peak_\xc(\pi)$.
Let $1\le i\le |\pi|$. Note that for $i=1$ we have $\effx(\pi[1,1])\le \norm{\cV}\le x+2\norm{\cV}$.
Thus, we can focus on $i\ge 2$.
If $\effx(\pi[1,i])\le \frac{i-2}{\pi}x$, then $\effx(\pi[1,i])\le x\le x+2\norm{\cV}$ and we are done.
Otherwise, we use the Steinitz property: 
\[\begin{split}
\effx(\pi[1,i])-\frac{i-2}{\pi}x\le 2\norm{\cV} \implies
\effx(\pi[1,i])\le 2\norm{\cV}+\frac{i-2}{\pi}x\le x+ 2\norm{\cV}
\end{split}
\]
so we have $\peak_\xc(\pi)\le x+2\norm{\cV}$. 
\hfill \qed

\subsection{The Case of $\dVAS{1}$}
\label{apx: the 1 VAS case}
\begin{theorem}
    \label{thm:1 VAS box reach coincides with reach eventually}
    Consider a $\dVAS{1}$ $\cV=\{a_1,\ldots,a_k\}\subseteq \bbZ$.
    There exists $M_1\in \bbN$ (polynomial in $\norm{\cV}$) such that $\reach{\cV}\cap [M_1,\infty]=\boxreach{\cV}\cap [M_1,\infty]$.  
\end{theorem}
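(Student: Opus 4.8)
The plan is to reduce everything to a one-dimensional reasoning, which makes the structure of $\boxreach{\cV}$ transparent. First I would normalize $\cV=\{a_1,\dots,a_k\}\subseteq\bbZ$: if every $a_i\le 0$ then $\reach{\cV}=\{0\}$ and we may take $M_1=1$, so assume some $a_i>0$. Let $p=\max\{a_i\}>0$ be the largest positive generator, and let $n=-\min\{a_i\mid a_i<0,\ \text{if any}\}$ be the absolute value of the most negative one (if there is no negative generator, the situation is even easier: $\reach{\cV}=\intCone(\cV)$ consists of $0$ together with all sufficiently large multiples of $\gcd$, and every reaching path is automatically monotone increasing, hence box-reaching). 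So the interesting case is $p>0$ and $n>0$, with $\norm{\cV}=p$ (up to the factor $d=1$ in the definition of $\norm{\cdot}$, which is irrelevant here).

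\medskip

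The key observation is that $p$ is box-reachable by the single-step path $(a_i)$ with $a_i=p$, and more importantly that for any $j\ge 1$, the target $j\cdot p$ is box-reachable: just apply the generator $p$ exactly $j$ times, since partial sums $p,2p,\dots,jp$ are nondecreasing and bounded by $jp$. Now, given an arbitrary target $t$ with $t\ge M_1$ for a threshold $M_1$ to be fixed, suppose $t\in\reach{\cV}$, so $t\in\intCone(\cV)$. I would write $t=q\cdot p + r$ where $0\le r< p$; the idea is to box-reach $t$ by first climbing up to some safe plateau using copies of $p$, then performing a short ``correction'' path $\rho$ whose effect is $r$ modulo multiples of $p$ — concretely, the plan is to show that $r$ (or $r$ plus a bounded multiple of $p$) is the effect of some path $\rho$ with $\drop(\rho)$ and $\peak(\rho)$ bounded by a polynomial in $p$ (e.g.\ by $p\cdot k$ or via a Steinitz-type rearrangement of the original reaching path), and then sandwich $\rho$ between enough copies of $p$: take $\spos=c\cdot p$ copies first so that we are above $\drop(\rho)$, run $\rho$ so that we never fall below $0$ and never exceed $t$ (here we need $t$ large enough that $\spos+\peak(\rho)\le t$), and finish with enough copies of $p$ to land exactly on $t$, which works precisely because $t\equiv \eff(\spos\cdot\rho)\pmod p$ once we have arranged $\rho$'s effect to be $\equiv r\pmod p$. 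Choosing $M_1$ to be a suitable polynomial in $\norm{\cV}$ — large enough to absorb the drop/peak of $\rho$ and the initial and final plateaus — completes the argument.

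\medskip

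The main obstacle is the middle step: producing a short correction path $\rho$ with small drop and peak whose effect has the right residue modulo $p$, together with the bookkeeping that the three pieces ($\spos$, $\rho$, closing copies of $p$) actually fit inside the box $[0,t]$. One clean way to handle this is to invoke the Steinitz Lemma (\cref{lem:steinitz}) on the original reaching path for $t$: reorder it into a Steinitz path $\pi$, then by \cref{lem: drop and peak of steinitz} (applied in dimension one) we get $\drop(\pi)\le 2\norm{\cV}$ and $\peak(\pi)\le t+2\norm{\cV}$. For a one-dimensional VAS a Steinitz path is already almost box-reaching, and the only obstruction is a peak overshoot of at most $2\norm{\cV}$ near where the path is already close to $t$; one then prepends and appends copies of $p$ to shift this overshoot below the $t$-line and to land exactly on $t$ using the residue argument. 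I expect the residue/divisibility bookkeeping — ensuring the final copies of $p$ land exactly on $t$ rather than over- or undershooting — to require the most care, but it is elementary: it amounts to noting $t-\eff(\spos\cdot\pi)$ is a nonnegative multiple of $p$ once $\spos$ is chosen as a multiple of $p$ and $\pi$ has effect $\equiv t\pmod{p}$, and $M_1=O(\norm{\cV}^2)$ (or whatever polynomial the precise computation yields) suffices.
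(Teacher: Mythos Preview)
Your high-level plan---climb to a safe height using copies of a positive generator, perform a bounded correction, then finish with more copies to land exactly on $t$---is sound and is in fact the shape of the paper's argument. But the concrete realization you propose does not close. You suggest taking the Steinitz rearrangement $\pi$ of the original reaching path for $t$ and then \emph{prepending and appending} copies of $p$. Since $\eff(\pi)=t$, any path of the form $(p)^{c_1}\cdot\pi\cdot(p)^{c_2}$ with $c_1+c_2>0$ has effect strictly greater than $t$ and cannot end at $t$; your own bookkeeping line ``$t-\eff(\spos\cdot\pi)$ is a nonnegative multiple of $p$'' fails precisely here, as the difference is negative. To make the sandwich work you need a middle segment $\rho$ with $\eff(\rho)<t$ and $\eff(\rho)\equiv t\pmod p$, and you have not produced one: you cannot simply delete copies of $p$ from the original path (it may contain none---e.g.\ $\cV=\{2,5\}$, $p=5$, $t=2m$ reached by $(2)^m$), and Steinitz only reorders, it does not change the multiset.

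The paper avoids this by working with the \emph{smallest} positive generator $d$ rather than the largest. For any large reachable $t$, a pigeonhole argument on a long high-staying suffix of a reaching path shows that some positive $b\in\cV$ occurs at least $d$ times; removing $d$ copies of $b$ preserves the residue modulo $d$ (since $d\mid db$) and shortens the path. Iterating yields a reachable $k\le\norm{\cV}^3$ with $k\equiv t\pmod d$, and then a bounded-peak path to $k$ followed by copies of $d$ box-reaches $t$. No Steinitz is invoked in dimension~$1$. If you want to repair your approach, the missing ingredient is exactly this: a way to extract from \emph{any} reaching path a subpath whose removal preserves a suitable residue---and that forces you to work modulo the generator you can guarantee appears often enough, not modulo a generator of your choosing.
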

\begin{proof}
We consider the setting of a $\dVAS{1}$ $\cV=\{a_1,\ldots,a_k\}\subseteq \bbZ$.
We show that there exists $M\in \bbN$ (polynomial in $\norm{\cV}$) such that for every $n\ge M$ we have that $n\in \reach{\cV}$ if and only if $0\Bleads{\pi}n$.

Let $d=\min\{a \in \cV\mid a>0\}$. If there are no positive numbers in $\cV$ then $\reach{\cV}=\{0\}$. Thus, we can assume $d>0$.
Intuitively, we claim that if $n$ is large enough, then it can be reached by a short path to some small $k$ with $k\equiv n\bmod d$, followed by enough additions of $d$, and such a path is box-reaching.

Let $M=2\norm{\cV}^3$, and consider a reachable $n>M$ with a path $0\Nleads{\pi}n$. Let $\rho$ be the maximal suffix of $\pi$ that does not go below $\norm{\cV}^3$, then $\eff(\rho)\ge \norm{\cV}^3$ (as $\rho$ starts around $\norm{\cV}^3$ and needs to reach beyond $M$).
Since each transition adds at most $\norm{\cV}$, we have $|\rho|>\norm{\cV}^2$. 
In particular, there is some positive $b\in \cV$ such that $b$ appears at least $\norm{\cV}>d$ times as a transition in $\rho$.
We can remove $d$ occurrences of $b$ to obtain a shorter path that reaches a value with the same modulo $d$ as $n$. 

By repeating this argument, we have that for $n>M$ it holds that $n$ is reachable if and only if there exists $k\le \norm{\cV}^3$ such that $k$ is reachable and $k\equiv n\bmod d$.

For each reachable $k\le \norm{\cV}^3$, we can find a path $\rho_k$ such that $\vec{0}\Nleads{\rho_k}k$ and moreover, $\peak(\rho_k)$ is polynomial in $\norm{\cV}$ (by \cref{lem:blondin one VASS}, although a direct proof in this naive case is easy).

Let $M_1$ be the maximum among these $\peak(\rho_k)$, then for every $n\ge M_1$, if $n\in \reach{\cV}$ then there exists a reachable $k<\norm{\cV}^3$ such that $k\equiv n\bmod d$. We then box-reach $n$ by reaching $k$ via a path $\rho_k$ that stays lower than $M_1$, and then taking $d$ repeatedly until reaching $n$. Thus, $n\in\boxreach{\cV}$.
\hfill \qed
\end{proof}

\subsection{Proof of First Case -- The Cone Contains the Positive Quadrant}
\label{apx:proof of first case}
Set $W=16\norm{\cV}^3+M$, and note that indeed $W$ is polynomial in $\norm{\cV}$
since $M$ is polynomial in $\norm{\cV}$ (see remark after \cref{lem:ditc}).
Consider $\vec{v} \in \clshape\cap \reach{\cV}$. Our goal is to show that $\vec{v}\in \boxreach{\cV}$. 
Since $\vec{v}\in \clshape$, then $\vec{v} \ge (W,W)$. 
By \cref{obs:spos norm} we have $\norm{\spos} \leq 8\norm{\cV}^3$ and $W = 16\norm{\cV}^3 + M$, it follows that $\vec{v} - 2\spos \ge (M,M)$.

We claim that $\vec{v} - 2\spos$ is $M$-deep-in-the-cone. 
Indeed, Consider the H-representation $\cF=\{\vec{f_1},\vec{f_2}\}$ of $\cone(\cV)$, then 
we need to show that $\dotp{\vec{v}-2\spos,\vec{f_i}}\ge M$ for $i\in \{1,2\}$.

As discussed above, since $\cone(\cV)$ contains $\posquad$, then each extremal vector of $\cone(\vec{\chi_1},\vec{\chi_2})$ is of the form $(\ge 0,\le 0)$ or $(\le 0,\ge 0)$. Specifically, we can write 
$\vec{\chi_1}=(x_1,y_1)$ with $x_1\ge 0$ and $y_1\le 0$. Then, we have $\vec{f_1}=(-y_1,x_1)$, and in particular $\vec{f_1}\ge \vec{0}$. Since $f_1\neq \vec{0}$ has integer coordinates, we have $-y_1+x_1\ge 1$. By identical reasoning, writing $\vec{\chi}_2=(x_2,y_2)$ with $x_2\le 0$ and $y_2\ge 0$, we have $\vec{f_2}=(y_2,-x_2)$ and $y_2-x_2\ge 1$.

Since $\vec{v} - 2\spos \ge (M,M)$ we can write $\vec{v} - 2\spos=(z_1,z_2)$ with $z_1,z_2\ge M$. We then have
\[
\dotp{\vec{v}-2\spos,\vec{f_1}}=z_1\cdot (-y_1)+z_2\cdot x_1\ge M(-y_1+x_1)\ge M
\]
and similarly $\dotp{\vec{v}-2\spos,\vec{f_2}}\ge M$, concluding that $\vec{v}-2\spos$ is $M$-deep in the cone.




As our target $\vec{v}\in \reach{\cV}$, then in particular $\vec{v}\in \intCone(\cV)$, i.e., there are $\alpha_1,\dots,\alpha_k\in\bbN$ such that $\vec{v}=\alpha_1\vec{v_1}+\dots+\alpha_k\vec{v_k}$. 
Then we can write $\vec{v}-2\spos=\alpha_1\vec{v_1}+\ldots+\alpha_k\vec{v_k}-2\spos$. Recall that $\spos$ is also an integer combination of vectors from $\cV$,
so in particular $\vec{v}-2\spos\in\lattice(\cV)$. 
We now invoke the powerful Deep-in-the-Cone argument (\cref{lem:ditc}): since 
$\vec{v} - 2\spos$ is both $M$-deep-in-the-cone and in $\lattice(\cV)$, then $\vec{v}-2\spos\in\intCone(\cV)$. 
That is, $\vec{0}\Zleads{\rho}\vec{v}-2\spos$ for some $\bbZ$-path $\rho$. 
However, $\rho$ is not necessarily an $\bbN$-path in $\cV$, as the counters may become negative along it. 

Our next powerful tool is the Steinitz path. We apply \cref{lem:steinitz} to obtain from $\rho$ a $\bbZ$-path $\rho'$ that, intuitively, does not stray too far from the line connecting $\vec{0}$ to $\vec{v}-2\spos$ (see \cref{fig:first case cone contains pos quad}). 
More precisely, for every $2\le i\le |\rho'|$ we have
$\norm{\eff(\rho'[1,i])-\frac{i-2}{|\rho'|} (\vec{v}-2\spos)}\le 2\norm{\cV}$.

Recall (from \cref{obs:spos norm}) that $\spos$ is box-reachable in $\cV$, and let $\theta$ be a path such that $\vec{0}\Bleads{\theta}\spos$.
We now consider the path $\pi=\theta,\rho',\theta$. 
We claim that $\vec{0}\Bleads{\pi}\vec{v}$. Indeed, first note that $\eff(\pi)=2\eff(\theta)+\eff(\rho')=2\spos+\vec{v}-2\spos=\vec{v}$. It remains to show that $\pi$ is non-negative and in the $\vec{v}$-box.

\textbf{$\pi$ is non-negative.\quad} Since $\vec{0}\Bleads{\theta}\spos$, then up to the prefix $\theta$, the path $\pi$ remains non-negative. Moreover, from $\theta,\rho'$ (i.e. after reaching $\vec{v}-\spos$), the suffix is $\theta$, which remains non-negative even from $\vec{0}$, let alone from $\vec{v}-\spos$. It remains to check the infix $\rho'$.
Recall from \cref{obs:spos norm} that $\spos\ge (2\norm{\cV},2\norm{\cV})$. It is therefore enough to show that $\drop_\xc(\rho')\le 2\norm{\cV}$ and $\drop_\yc(\rho')\le 2\norm{\cV}$.
Since $\eff(\rho')=\vec{v}-2\spos\ge (M,M)$, then this is precisely guaranteed by \cref{lem: drop and peak of steinitz}.

\textbf{$\pi$ remains in the $\vec{v}$ box.\quad}
Since $\spos$ is box-reachable via $\theta$, then the $\theta$ prefix of $\pi$ cannot violate box-reachability. Similarly, the $\theta$ suffix cannot cause a box-reachability violation. 
Therefore, the only possible violation is if either $\peak_\xc(\theta\cdot \rho')>\vec{v}_\xc$ or $\peak_\yc(\theta\cdot \rho')>\vec{v}_\yc$.
where $\vec{v}=(\vec{v}_\xc,\vec{v}_\yc)$. Equivalently, if  $\peak_\xc(\rho')>\vec{v}_\xc-{\spos}_\xc$ or $\peak_\yc(\rho')>\vec{v}_\yc-{\spos}_\yc$.
However, again by \cref{lem: drop and peak of steinitz} we have $\peak_\xc(\rho')\le (\vec{v}_\xc-2{\spos}_\xc)+2\norm{\cV}\le \vec{v}_\xc$ and similarly $\peak_\yc(\rho')\le (\vec{v}_\yc-2{\spos}_\yc)+2\norm{\cV}\le \vec{v}_\yc$,
so the entire path is box-reaching.

We conclude that in this case $\vec{v}\in \boxreach{\cV}$, so we are done.

\subsection{Proof of Second Case -- The Cone Does not contain the Positive Quadrant}
\label{apx:proof of second case}
Set $W=16\norm{\cV}^4+4\norm{\cV}+\norm{\cV}M$.
If $\vec{v}-2\spos$ is not $M$-deep in the cone, 
there exists $\vec{f}\in \cF$ such that $\tup{\vec{f},\vec{v}}<M$.
Intuitively, it may seem necessary that $\vec{f}=\flow$ (since $\vlow$ is ``closer'' to all vectors in the positive quadrant). 
However, this is not necessarily the case: it could be that $\vhigh$ is almost $180^\circ$ from $\vlow$ (in which case \cref{fig:cone contains y axis} is misleading), but has smaller magnitude, and thus has smaller inner products. E.g., if $\vlow=(100,100)$ and $\vhigh=(-2,-1)$.

However, we claim that even in this case, $\dotp{\flow,\vec{v}-2\spos}$ is not too large. Indeed, if $\vhigh$ is in the second quadrant, i.e., of the form $(<0,\ge 0)$ then by similar considerations to the first case we have $\dotp{\fhigh,\vec{v}-2\spos}>M$, so $\vec{f}=\flow$ and we are done.

Otherwise, $\fhigh$ is of the form $(<0,<0)$. In \cref{apx:close is close to vlow} we show that in this case, if  $\dotp{\fhigh,\vec{v}-2\spos}\le M$ then $\dotp{\flow,\vec{v}-2\spos}\le \norm{\cV}M$.
Thus, in both cases we can assume $\dotp{\flow,\vec{v}-2\spos}\le \norm{\cV}M$, which we do henceforth.

By \cref{obs:spos norm} and since $\norm{\flow}\le \norm{\cV}$ we have:
\[\dotp{\vec{v},\flow}\le \norm{\cV}M+2\dotp{\spos,\flow}\le \norm{\cV}M+2\norm{\spos}\norm{\flow}\le \norm{\cV}M+16\norm{\cV}^3\norm{\cV}=\norm{\cV}M+16\norm{\cV}^4\]

Consider some vector $\vec{u}\in \cV$.
We claim that $\dotp{\vec{u},\flow}\ge 0$. Indeed, otherwise $\vec{u}$ points at the negative halfspace defined by $\flow$, which is a contradiction to $\vlow$ being the extremal vector of $\cone(\cV)$. Since all our vectors are integers, it follows that for every $\vec{u}\in \cV$, either $\dotp{\vec{u},\flow}=0$ or $\dotp{\vec{u},\flow}\ge 1$.

Since $\vec{v}\in \reach{\cV}$, there is a path $\vec{0}\Nleads{\pi}\vec{v}$ with $\pi=\vec{u_1}\vec{u_2}\cdots \vec{u_k}$. By the above, each step $\vec{u_i}$ of $\pi$ either maintains the inner product with $\flow$, or increases it by at least 1, but the sum of these products is $\dotp{\vec{v},\flow}$. By the upper bound above, we have $\dotp{\eff(\pi),\flow}=\sum_{i=1}^k\dotp{\vec{u_i},\flow}\le \norm{\cV}M+16\norm{\cV}^4$. 
In particular, we have $|\{i\mid \dotp{\vec{u_i},\flow}>0\}|\le \norm{\cV}M+16\norm{\cV}^4$.

Since $\vec{v}\ge (W,W)$, and the maximal effect of each transition is $\norm{\cV}$, we have $|\pi|\ge W/\norm{\cV}\ge 16\norm{\cV}^3$. Moreover, the number of vectors $\vec{u_i}$ with $\effx(\vec{u_i})>0$ is at least $W/\norm{\cV}\ge 16\norm{\cV}^3$. In the following we refer to this as $|\pi|$ for brevity (instead of defining a cumbersome increasing subsequence of indices).

Thus, in order for the inner products to remain low enough, we must have $|\{i\mid \dotp{\vec{u_i},\flow}=0\wedge \effx(\vec{u_i})>0\}|\ge 16\norm{\cV}^4+4\norm{\cV}+\norm{\cV}M - (\norm{\cV}M+16\norm{\cV}^4)= 4\norm{\cV}$.
That is, there are many vectors in $\pi$ that are parallel to the lower facet $\vlow$ of the cone.
Intuitively, we now construct a box-reaching path from $\vec{0}$ to $\vec{v}$ as follows: we start by staying parallel to $\vlow$ until we reach beyond $(2\norm{\cV},2\norm{\cV})$ (which is the Steinitz constant), we then take a Steinitz path (\cref{lem:steinitz}) to reach a vector whose inner product with $\flow$ is the same as that of $\vec{v}$. Finally, we traverse the remainder of the path using more parallel vectors to $\vlow$.

By guaranteeing that the effect of the prefix and suffix are large enough, we can ensure the Steinitz path does not exceed  the $\vec{v}$ box using \cref{lem: drop and peak of steinitz}.

We give the precise details of the latter part. Let $\vec{p_1},\ldots \vec{p_{2\norm{\cV}}},\vec{q_1},\ldots \vec{q_{2\norm{\cV}}}$ be $4\norm{\cV}$ vectors occurring in distinct indices of $\pi$ such that all of them are orthogonal to $\flow$ and have positive $\effx$. Construct paths $\mu=\vec{p_1},\ldots \vec{p_{2\norm{\cV}}}$ and $\eta=\vec{q_1},\ldots \vec{q_{2\norm{\cV}}}$, then since all the vectors are parallel to $\vlow$ and have positive $\effx$, then all the vectors are positive scalar products of $\vlow$ and in particular also have positive $\effy$. It follows that both $\mu$ and $\eta$ are box-reaching, and that $\eff(\mu),\eff(\eta)\ge (2\norm{\cV},2\norm{\cV})$. 

Let $\xi$ be a path constructed by the remaining vectors of $\pi$, after removing $\mu$ and $\eta$. Since $\eff(\mu\xi\eta)=\eff(\pi)$, we have $\vec{0}\Zleads{\mu\xi\eta}\vec{v}$. 
Let $\vec{p}=\eff(\mu)$ and $\vec{q}=\eff(\eta)$, then $\vec{p}\Zleads{\xi}\vec{v}-\vec{q}$.
We now apply \cref{lem:steinitz} to obtain a Steinitz path $\xi'$ such that $\vec{p}\Zleads{\xi'}\vec{q}$. In particular we have $\eff(\xi')=\vec{v}-\vec{q}-\vec{p}$.
We claim that $\vec{0}\Bleads{\mu\xi'\eta}\vec{v}$. Clearly $\vec{0}\Zleads{\mu\xi'\eta}\vec{v}$, it remains to show that the path remains positive and within the $\vec{v}$ box.

\textbf{$\mu\xi'\eta$ is non-negative.\quad}
Since $\vec{0}\Bleads{\mu}\vec{p}$, then up to the prefix $\mu$, the path $\mu\xi'\eta$ remains non-negative. Moreover, from $\mu\xi'$ (i.e. after reaching $\vec{v}-\vec{q}$), the suffix is $\eta$, which remains non-negative even from $\vec{0}$, let alone from $\vec{v}-\vec{q}$. It remains to check the infix $\xi'$.
Recall that $\eff(\mu)\ge (2\norm{\cV},2\norm{\cV})$. It is therefore enough to show that $\drop_\xc(\xi')\le 2\norm{\cV}$ and $\drop_\yc(\xi')\le 2\norm{\cV}$. This would readily follow from \cref{lem: drop and peak of steinitz}, provided $\eff(\xi')\ge \vec{0}$. 

To show the latter, observe that $\eff(\mu)\le |\mu|(\norm{\cV},\norm{\cV})=(2\norm{\cV}^2,2\norm{\cV}^2)$, and similarly $\eff(\eta)\le (2\norm{\cV}^2,2\norm{\cV}^2)$. It therefore follows that $\eff(\xi)\ge \eff(\pi)-2(2\norm{\cV}^2,2\norm{\cV}^2)\ge \vec{0}$ (since $\eff(\pi)\ge (W,W)$ and $W\ge 16\norm{\cV}^3$).

\textbf{$\pi$ remains in the $\vec{v}$ box. \quad}
Since $\mu$ and $\eta$ are box-reaching, the only possible violation is if 
$\peak_\xc(\xi')>(\vec{v}-\vec{p})_\xc\ge \vec{v}_\xc-2\norm{\cV}$ or
$\peak_\yc(\xi')>(\vec{v}-\vec{p})_\yc\ge \vec{v}_\yc-2\norm{\cV}$.
This, however, is not the case by \cref{lem: drop and peak of steinitz}.

We conclude that $\vec{0}\Bleads{\mu\xi'\eta}\vec{v}$, so $\vec{v}\in \boxreach{\cV}$ and we are done.

\subsection{Vectors in $\clshape$ are Close to $\vlow$.}
\label{apx:close is close to vlow}
\begin{proposition}
\label{prop:fhigh is far away}
    Let $C=\cone(\{\vec{v_1},\vec{v_2}\})$ where $\vec{v_1}=(x_1,y_1)$ is $(>0,>0)$ and $\vec{v_2}=(-x_2,y_2)$ is $(<0,<0)$ (and the angle between them is less than 180 counter-clockwise). If the normals of $\vec{v_1}$ and $\vec{v_2}$ that define the cone are $\vec{f_1}$ and $\vec{f_2}$, then $\dotp{\vec{v},\vec{f_1}}\le \norm{\cV}\dotp{\vec{v},\vec{f_2}}$.
\end{proposition}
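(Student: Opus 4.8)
The plan is to write $\vec{v}$ in the basis $\{\vec{v_1},\vec{v_2}\}$ that generates $C$ and reduce the claimed inequality to a comparison of the two coefficients. The statement should be read for $\vec{v}\in C$ — which is the case in the application, where $\vec{v}$ (a shifted target $\vec{v}-2\spos$) lies deep in $\posquad\cap C$; in fact I will only use $\vec{v}\in C$ together with $\vec{v}_\xc\ge 0$. So write $\vec{v}=\lambda_1\vec{v_1}+\lambda_2\vec{v_2}$ with $\lambda_1,\lambda_2\ge 0$. Since $\vec{f_1}$ is the normal to $\vec{v_1}$ and $\vec{f_2}$ the normal to $\vec{v_2}$, we have $\dotp{\vec{v_1},\vec{f_1}}=0$ and $\dotp{\vec{v_2},\vec{f_2}}=0$, so $\dotp{\vec{v},\vec{f_1}}=\lambda_2\dotp{\vec{v_2},\vec{f_1}}$ and $\dotp{\vec{v},\vec{f_2}}=\lambda_1\dotp{\vec{v_1},\vec{f_2}}$, and the inequality to be proved is
\[
\lambda_2\,\dotp{\vec{v_2},\vec{f_1}}\ \le\ \norm{\cV}\,\lambda_1\,\dotp{\vec{v_1},\vec{f_2}}.
\]

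The crux is the identity $\dotp{\vec{v_2},\vec{f_1}}=\dotp{\vec{v_1},\vec{f_2}}$. With $\vec{f_1}=(-y_1,x_1)$ and $\vec{f_2}=(y_2,x_2)$ the inward normals (chosen so that $\dotp{\vec{f_1},\vec{v_2}}\ge 0$ and $\dotp{\vec{f_2},\vec{v_1}}\ge 0$), a one-line computation shows both inner products equal $x_2y_1+x_1y_2$, i.e.\ the signed area of $(\vec{v_1},\vec{v_2})$, which is strictly positive since the angle from $\vec{v_1}$ to $\vec{v_2}$ is nonzero and less than $180^\circ$ counter-clockwise. I would double-check the orientation of $\vec{f_1},\vec{f_2}$ at this point, since \cref{fig:cone contains y axis} is visually misleading when $\vhigh$ lies in the third quadrant. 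Denoting this common positive value by $A$, the displayed inequality is equivalent to $\lambda_2\le\norm{\cV}\lambda_1$.

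It remains to bound $\lambda_2/\lambda_1$, and this is exactly where $\vec{v}_\xc\ge 0$ enters: reading off the first coordinate of $\vec{v}=\lambda_1\vec{v_1}+\lambda_2\vec{v_2}=\lambda_1(x_1,y_1)+\lambda_2(-x_2,y_2)$ gives $\lambda_1x_1-\lambda_2x_2\ge 0$, hence $\lambda_2x_2\le\lambda_1x_1$. Since $\vec{v_1}=\vlow$ and $\vec{v_2}=\vhigh$ may be taken to be the extremal generators of $\cone(\cV)$ lying in $\cV$, their entries are integers of absolute value at most $\norm{\cV}$; in particular $x_2\ge 1$ and $x_1\le\norm{\cV}$, so $\lambda_2\le\lambda_2x_2\le\lambda_1x_1\le\norm{\cV}\lambda_1$, which finishes the proof (the degenerate case $\lambda_2=0$ being immediate). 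None of the calculations is hard once the setup is right; the only real content is noticing that the two ``depths'' $\dotp{\vec{v_2},\vec{f_1}}$ and $\dotp{\vec{v_1},\vec{f_2}}$ coincide — so they cancel rather than needing to be estimated separately — and recognizing that the effective hypothesis on $\vec{v}$ is $\vec{v}\in C$ with non-negative $\xc$-coordinate, which in the proof of \cref{thm:main:reach coincides with box reach from W} comes for free from $\vec{v}$ lying deep in the first quadrant.
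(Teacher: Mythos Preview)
Your proof is correct, and you are right to flag that the implicit hypotheses are $\vec{v}\in C$ together with $\vec{v}_\xc\ge 0$; the paper's own proof uses exactly the same two assumptions without stating them (the substitution $y_2>-\tfrac{y_1}{x_1}x_2$ is multiplied by $x$, which needs $x\ge 0$, and the final step $\tfrac{x_2}{x_1}\dotp{\vec{v},\vec{f_1}}\ge \tfrac{1}{\norm{\cV}}\dotp{\vec{v},\vec{f_1}}$ needs $\dotp{\vec{v},\vec{f_1}}\ge 0$).

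The route, however, is genuinely different. The paper argues directly in coordinates: it rewrites the angle condition as the slope inequality $y_2>-\tfrac{y_1}{x_1}x_2$, substitutes it into $\dotp{\vec{v},\vec{f_2}}=xy_2+yx_2$, and after factoring obtains $\dotp{\vec{v},\vec{f_2}}>\tfrac{x_2}{x_1}\dotp{\vec{v},\vec{f_1}}$, concluding via $1\le x_2$ and $x_1\le\norm{\cV}$. Your argument instead expands $\vec{v}$ in the cone basis, observes the symmetric identity $\dotp{\vec{v_2},\vec{f_1}}=\dotp{\vec{v_1},\vec{f_2}}$ (both equal the signed area $x_1y_2+x_2y_1$, which is exactly what the paper's slope inequality says is positive), and thereby reduces the claim to $\lambda_2\le\norm{\cV}\lambda_1$, which you read off from the $\xc$-coordinate. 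Both proofs ultimately hinge on the same numerical bounds $x_2\ge 1$ and $x_1\le\norm{\cV}$; yours isolates the cancellation more explicitly and makes transparent why the two ``depths'' are comparable, while the paper's computation is shorter but leaves the structure hidden inside the algebra.
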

\begin{proof}
    Due to the angle between the vertices, their respective normals defining the cone are $\vec{f_1}=(-y_1,x_1)$ and $\vec{f_2}=(y_2,x_2)$.
    It holds that $\tup{\vec{v},\vec{f_2}}=xy_2 + yx_2$.

    We examine the angles of the vectors: the angles that $\vec{v_1}$ and $\vec{v_2}$ induce are $\alpha=\arctan(\frac{y_1}{x_1})$ and $\beta=\arctan(\frac{-y_2}{x_2})+\pi$, respectively.
    We know that $\alpha<\beta<\alpha+\pi$. The latter gives us $\arctan(\frac{-y_2}{x_2})+\pi < \arctan(\frac{y_1}{x_1}) + \pi$. By monotonicity, $\frac{-y_2}{x_2}<\frac{y_1}{x_1}$. 
        Therefore, $y_2>-\frac{y_1}{x_1}x_2$.
        We get that 
        \[\dotp{\vec{v},\vec{f_2}}=xy_2+yx_2>-x\frac{y_1}{x_1}x_2+yx_2=\frac{x_2}{x_1}(yx_1-xy_1)=\frac{x_2}{x_1}\dotp{\vec{v},\vec{f_1}}\ge \frac{1}{\norm{\cV}}\dotp{\vec{v},\vec{f_1}}\] 
        where the last inequality follows since $1\le x_1,x_2\le \norm{\cV}$, so the ratio is bounded.

        We now have $\dotp{\vec{v},\vec{f_1}}\le \norm{\cV}\dotp{\vec{v},\vec{f_2}}$, so we are done.
    \hfill \qed
\end{proof}

\subsection{Proof of \cref{thm:d box reach to 2d reach}}
\label{apx:thm:d box reach to 2d reach}
Denote 
\[
R = \{(v_1,\dots,v_d) \in \mathbb{N}^d \mid (v_1,\dots,v_d,0,\dots,0) \in \reach{\cV'}\}.
\]
Then we need to prove that $\boxreach{\cV} = R$.

For the first inclusion, take $\vec{v} = (v_1,\dots,v_d) \in \boxreach{\cV}$ via path $\pi$ in $\cV$ with $\vec{0} \Bleads{\pi} \vec{v}$. Construct $\pi'$ in $\cV'$ by first taking $v_i$ copies of $e_{d+i}$ for each $1 \leq i \leq d$, reaching  $(0,\dots,0,v_1,\dots,v_d)$. Then apply each $\vec{w_i} = (w_i^{(1)},\dots,w_i^{(d)})$ in $\pi$ as $(w_i^{(1)},\dots,w_i^{(d)},-w_i^{(1)},\dots,-w_i^{(d)})$. 

After processing $k$ vectors of $\pi$, the configuration is
\[
\left( \sum_{i=1}^k w_i^{(1)}, \dots, \sum_{i=1}^k w_i^{(d)}, v_1 - \sum_{i=1}^k w_i^{(1)}, \dots, v_d - \sum_{i=1}^k w_i^{(d)} \right)
\]
Since $\pi$ is box-reaching, each partial sum satisfies $0 \leq \sum_{i=1}^k w_i^{(j)} \leq v_j$ for $1 \leq j \leq d$, ensuring all coordinates remain non-negative. The final state is $(v_1,\dots,v_d,0,\dots,0)$, proving $\vec{v} \in R$.

For the converse inclusion, take $\vec{v} = \left(v_1,\dots,v_d\right) \in R$ via path $\pi$ in $\cV'$ with $\vec{0}\Nleads{\pi}\left(v_1,\dots,v_d,0,\dots,0\right)$. 
We first notice that we can safely rearrange $\pi$ to apply all $e_{d+j}$ vectors first, reaching $\left(0,\dots,0,\alpha_1,\dots,\alpha_d\right)$, then apply all vectors of the form $\left(w_i^{(1)},\dots,w_i^{(d)},-w_i^{(1)},\dots,-w_i^{(d)}\right)$. 
Indeed, since $e_{d+j}$ has only non-negative coordinates, it can be taken earlier without violating the VAS semantics (i.e., without causing a counter to become negative).

The final configuration therefore implies for each $1 \leq j \leq d$ that:
\begin{align*}
\sum_{i} w_i^{(j)} &= v_j \\
\alpha_j &= v_j 
\end{align*}

Now consider the projection $\rho$ of $\pi$ to the first $d$ coordinates gives a $\cV$-path to $\vec{v}$. After $m$ steps, the configuration of $\rho$ is
\[
\left( \sum_{i=1}^m w_i^{(1)}, \dots, \sum_{i=1}^m w_i^{(d)}, v^{(1)} - \sum_{i=1}^m w_i^{(1)}, \dots, v^{(d)} - \sum_{i=1}^m w_i^{(d)} \right)
\]
Non-negativity of coordinates $d+1$ to $2d$ in $\pi$ implies $\sum_{i=1}^m w_i^{(j)} \leq v_j$ for each $j$, so $\rho$ stays within the box induced by $\vec{v}$, proving $\vec{v} \in \boxreach{\cV}$.

\subsection{Box Reachability in \oVASS is Semilinear}
\label{apx:box reach 1 VASS semilinear}
Formally, a \oVASS is $\cA=\tup{Q,T}$ where $Q$ is a finite set of states, and $T\subseteq Q\times \bbZ\times Q$ is a transition relation. A \emph{configuration} of $\cA$ is $(x,q)\in \bbZ\times Q$ where $x$ is the counter value and $q$ is a state. A path is a sequence of configurations $\pi=(x_1,q_1),\ldots,(x_n,q_n)$ where $((x_i,q_i),(x_{i+1}-x_i),(x_{i+1},q_{i+1}))\in T$ for every $1\le i<n$. An $\bbN$-path is a path where the counter value remains in $\bbN$. A configuration $(x,q)$ is \emph{reachable} from $(x',q')$ if there is an $\bbN$-path from $(x',q')$ to $(x,q)$.

We can fit our definition of box reachability to $\dVASS{1}$ as follows: we say that a configuration $(x^\tar,q^\tar)$ is \emph{box-reachable from $q_0\in Q$} if there is a path $(0,q_0)\Nleads{\pi}(x^\tar,q^\tar)$ such that $\eff(\pi[1,i])\le x^\tar$ for all $i\le |\pi|$. Note that the states do not restrict box-reachability.
For a $\dVASS{1}$ $\tup{Q,T}$ and state $q_0\in Q$, we then denote by $\boxreach{\tup{Q,T},q_0\to q^\tar}$ the set of values $x^\tar\in \bbN$ such that $(x^\tar,q^\tar)$ is box-reachable from $(0,q_0)$.
We use the definitions of $\drop,\peak,\eff$ by referring only to the counter, disregarding the state.

\begin{example}[Box Reachability in $\oVASS$]
\label{xmp:box reach near axes}
Consider the \dVAS{2} $\cV=\{(1,7),(3,-6)(-2,6)\}$ and the target vector \(\vec{t} = (6,8)\). It is box-reachable via the path depicted in \cref{fig:zig zag path near axes}.

We view this $\dVAS{2}$ as a \oVASS by restricting the $\yc$ coordinate to be at most $8$, and treating $\{0,\ldots,8\}$ as the state space, and $\xc$ as the 1-dimensional vectors.
Thus, state $8$ with vector $6$ is box-reachable. 

Observe that the initial \((1,7)\) enables the $(3,-6),(-2,6)$ cycle to be taken initially (and repeatedly). The final $(1,7)$ ``closes'' this path so that it is within its box. 

\begin{figure}[ht]
    \centering
\begin{tikzpicture}[scale=0.5]
    \draw[->] (0,0) -- (24,0) node[right] {$x$};
    \draw[->] (0,0) -- (0,4) node[above] {$y$};

    \draw[fill] (0,0) circle [radius=0.1] node[above left] {$(0,0)$};
    \draw[fill] (4,3.5) circle [radius=0.1] node[right] {$(1,7)$};
    \draw[fill] (13,0.5) circle [radius=0.1] node[right] {$(4,1)$};
    \draw[fill] (8,3.5) circle [radius=0.1] node[right] {$(2,7)$};
    \draw[fill] (17,0.5) circle [radius=0.1] node[right] {$(5,1)$};
    \draw[fill] (21,4) circle [radius=0.1] node[right] {$(6,8)$};

    \draw[->] (0,0) -- (4,3.5) node[midway, sloped, above] {$(1,7)$};
    \draw[->] (4,3.5) -- (13,0.5) node[midway, sloped, below] {$(3,-6)$};
    \draw[->] (13,0.5) -- (8,3.5) node[pos=0.2, sloped, above] {$(-2,6)$};
    \draw[->] (8,3.5) -- (17,0.5) node[midway, sloped, above] {$(3,-6)$};
    \draw[->] (17,0.5) -- (21,4) node[midway, sloped, above] {$(1,7)$};

    \draw[dashed] (0,4) -- (21,4);
    \draw[dashed] (21,0) -- (21,4);
\end{tikzpicture}

    \caption{A zig-zag box-reaching path of \cref{xmp:box reach near axes}.}
    \label{fig:zig zag path near axes}
\end{figure}
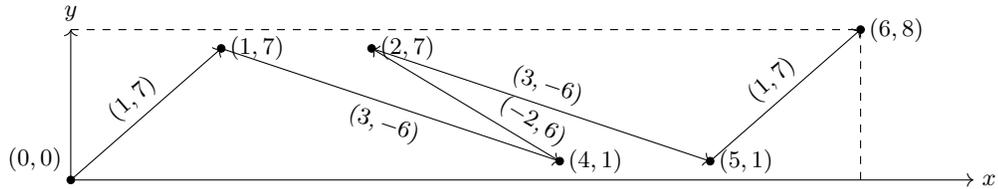
%
\end{example}

Fix an \oVASS $\tup{Q,T}$ for the remainder of the section.

We first recall a fundamental result about reachability in \oVASS from~\cite{leroux2004flatness,blondin2021reachability}, which intuitively states that reachability can be characterized using \emph{Linear Path Schemes} with one cycle. 
A \emph{Linear Path Scheme}\footnote{Usually LPS can have more than one cycle, but for \oVASS this is unnecessary.} (LPS, for short) is an expression of the form  $\alpha\beta^*\gamma$ where $\alpha,\beta,\gamma$ are paths and $(x,q)\Nleads{\alpha}(x_1,q_1)$, $(0,q_1)\Zleads{\beta}(\Delta_x,q_1)$, $(x'_1,q_1)\Nleads{\gamma}(x',q')$ for some $x_1,q_1,\Delta_x$ and $x'_1$. Note that $\beta$ is a \emph{cycle} on the state $q_1$. A path is \emph{induced} by an LPS $\alpha\beta^*\gamma$ is of the form $\alpha\beta^k\gamma$ for some $k\in \bbN$.

\begin{lemma}[Proposition 4.3 of \cite{blondin2021reachability}]
\label{lem:blondin one VASS}
    Let $\tup{Q,T}$ be a \oVASS and $(x,q),(x',q')$ configurations. Then $(x',q')$ is reachable from $(x,q)$ if and only if there exists an LPS $\alpha \beta^* \gamma$ such that the following hold:
    \begin{itemize}
        \item There exists some $k\in \bbN$ such that $(x,q)\Nleads{\alpha\beta^k\gamma}(x',q')$.
        \item $|\alpha|,|\beta|,|\gamma|\le p_1(|Q|,\norm{T})$ where $p_1$ is a fixed polynomial,\footnote{The proof in \cite{blondin2021reachability} in fact shows $p_1$ is of degree at most $3$.} independent of $\tup{Q,T}$ .
    \end{itemize}
\end{lemma}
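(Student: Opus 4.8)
This statement is quoted verbatim as Proposition~4.3 of~\cite{blondin2021reachability}, so strictly speaking its ``proof'' here is a pointer to that paper (which itself builds on the flatness analysis of~\cite{leroux2004flatness}); nevertheless, let me outline the argument one would carry out to establish such an LPS characterization. The $(\Leftarrow)$ direction is immediate, since any path induced by an LPS is in particular an $\bbN$-path and hence witnesses reachability. All the content is in $(\Rightarrow)$: starting from an arbitrary $\bbN$-path $\pi$ from $(x,q)$ to $(x',q')$, one has to manufacture an LPS $\alpha\beta^*\gamma$ with $|\alpha|,|\beta|,|\gamma|$ polynomial in $|Q|$ and $\norm{T}$, together with some $k$ such that $(x,q)\Nleads{\alpha\beta^k\gamma}(x',q')$. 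Throughout it is convenient to work with a \emph{shortest} witnessing path, so that counter-bounded stretches never repeat a configuration.

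First I would fix a polynomial threshold $B$ (of cubic order in $|Q|$ and $\norm{T}$) and split according to how high a shortest witness climbs. If it never reaches counter value $B$, then it lives in the finite configuration graph on the at most $B\cdot|Q|$ configurations with counter in $[0,B)$; a shortest path there repeats no configuration, so it already has polynomial length, and one is done with $\beta,\gamma$ empty, $\alpha=\pi$ and $k=0$. Otherwise the path climbs above $B$. The point is that in the region where the counter is large the non-negativity constraint is inactive, so locally the behaviour is that of a $\bbZ$-VASS, while the maximal infixes on which the counter stays below $B$ are short (no repeated configuration). One then uses standard pumping / cycle-excision on the high part --- decompose the multiset of transitions used there into simple cycles, and iterate or delete them --- to flatten it into a single cycle $\beta$ at some state $q_1$ pumped $k$ times; the run from $(x,q)$ up to $(x_1,q_1)$ becomes a polynomial prefix $\alpha$ landing at a controlled height, and the descent from the last high configuration down to $(x',q')$ becomes a polynomial suffix $\gamma$.

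The part I expect to be the genuine obstacle is the quantitative bookkeeping needed to reduce the scheme to \emph{one} cycle of \emph{polynomial} length, as opposed to a bounded number of polynomial-length cycles. Concretely one must: (i) bound, for a shortest witness, how often the counter crosses the threshold $B$, so that the many ``low'' pieces really do aggregate into a single polynomial $\alpha$ and a single polynomial $\gamma$ and only one high excursion survives; (ii) control the residue / gcd structure of the cycle effects available at $q_1$, so that the exact target value $x'$ is hit by the polynomial suffix $\gamma$ without the need for a second pumped cycle of the opposite sign; and (iii) keep every length polynomial (cubic, per the footnote) while doing so. This is precisely the analysis carried out in~\cite{blondin2021reachability}, and for our purposes we invoke it as a black box: all that \cref{sec:box reach in 1 VASS} uses downstream is the \emph{existence} of such a bounded single-cycle LPS, together with the consequence (noted there) that any path it induces can overshoot its target box by at most $|\beta|+|\gamma|$.
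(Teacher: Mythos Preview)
Your proposal is correct and matches the paper's own treatment: the paper does not prove \cref{lem:blondin one VASS} at all, it simply imports it as Proposition~4.3 of~\cite{blondin2021reachability} and uses it as a black box, exactly as you say in your opening sentence. The sketch you provide of the underlying argument (threshold split, bounded-counter pigeonhole for the low part, cycle pumping in the high part, and the delicate single-cycle/residue bookkeeping) is a reasonable summary of the flatness analysis in~\cite{leroux2004flatness,blondin2021reachability}, but it goes strictly beyond what the present paper contains or needs; downstream the paper only uses the existence of the bounded single-cycle LPS and the consequent bound on the overshoot, as you correctly note.
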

For our fixed \oVASS $\tup{Q,T}$, let $\BLPS=p_1(|Q|,\norm{T})$. 
In light of \cref{lem:blondin one VASS}, we define the set of \emph{short LPS} of $\tup{Q,T}$ as $\LPS=\{\alpha\beta^*\gamma\mid |\alpha|,|\beta|,|\gamma|\le \BLPS\}$, and observe that the lemma states that the reachability relation of $\tup{Q,T}$ is in fact captured by paths induced by an LPS in $\LPS$.

In general, paths induced by an LPS are not box-reaching. Moreover, in general it might not be possible to permute the order of transitions of such a path so that it becomes box-reachable. 
Despite this, 
we show that even when box reachability is concerned, ``most'' of the path can be taken via an LPS, and only some additions may be needed at the end to ``close'' the box. 

Consider a path $\pi$. We define its \emph{overshoot} as $\oshoot(\pi)=\peak(\pi)-\eff(\pi)$. Thus, the overshoot measures by how much $\pi$ violates box-reachability.
Consider an LPS $\alpha\beta^*\gamma$ with $\eff(\beta)>0$ and an induced path $\alpha\beta^k\gamma$ with a very large $k$, specifically such that $k\cdot \eff(\beta)>\peak(\alpha)$.
That is, a path that takes a positive cycle enough times to go well beyond the configurations reached by $\alpha$.
We notice the following.
\begin{itemize}
    \item $\eff(\alpha\beta^k\gamma)=\eff(\alpha)+k\cdot \eff(\beta)+\eff(\gamma)$ (simply a sum of the effects).
    \item $\peak(\alpha\beta^k\gamma)=\max\{\eff(\alpha)+k\cdot \eff(\beta)+\peak(\gamma),\eff(\alpha)+(k-1)\eff(\beta)+\peak(\beta)\}$ where the maximum depends on whether the last iteration of $\beta$ peaks further than $\gamma$.
\end{itemize}

Therefore, we have $\oshoot(\alpha\beta^k\gamma)=\max\{\oshoot(\gamma),\oshoot(\beta)-\eff(\gamma)\}$. In particular, observe that this is independent of $k$. We therefore define this value as the overshoot of the LPS, i.e., $\oshoot(\alpha\beta^*\gamma)=\max\{\oshoot(\gamma),\oshoot(\beta)-\eff(\gamma)\}$. Note that for small values of $k$ (i.e., those where $\peak(\alpha)$ might be beyond $\peak(\beta^k)$), the equality above does not hold, and the value may depend on $\alpha$, but this does not concern us in the following.

\begin{example}
    \label{xmp:zig zag for overshoot}

    Recall the $\dVAS{2}$ of \cref{xmp:box reach near axes}, and view it as a $\dVASS{1}$ where the $\yc$ coordinate is the state. Consider the LPS $\alpha\beta^*\gamma$ with $\alpha=(1,7)$, $\beta=(3,-6),(-2,6)$ and empty $\gamma$. We have $\eff(\alpha\beta^k\gamma)=1+1\cdot k$ (on the $\xc$ axis) and $\peak(\alpha\beta^k\gamma)=1+1\cdot k+2$ (since $\peak(\beta)=3$) and therefore $\oshoot(\alpha\beta^k\gamma)=2$.

    Also observe that the suffix $(3,-6)(1,7)$ ``overcomes'' this $\oshoot$. Below we define it as a \emph{closing} suffix. 
\end{example}

Recall that for every $\alpha\beta^*\gamma\in \LPS$ we have $|\alpha|,|\beta|,|\gamma|\le \BLPS$, and that for every path $\pi$ we have $\eff(\pi)\le \norm{T}\cdot |\pi|$. We can thus \emph{define} $\mshoot=\max\{\oshoot(\alpha\beta^*\gamma)\mid \alpha\beta^*\gamma\in \LPS\}$, and we have  $\mshoot\le p_2(|Q|,\norm{T})$ for some fixed polynomial $p_2$.

Let $q_\gamma$ be the last state reached by $\gamma$. Consider a path $\theta$ starting at $q_\gamma$ such that $\drop(\theta)=0$ and $\peak(\theta)=\eff(\theta)$. We say that $\theta$ \emph{closes} the LPS $\alpha\beta^*\gamma$ if $\eff(\theta)\ge \oshoot(\alpha\beta^*\gamma)$.

Notice that if $\theta$ closes $\alpha\beta^*\gamma$, then for large enough $k$ the path $\alpha\beta^k\gamma\theta$ is box-reaching, since $\eff(\theta)$ is large enough for the counter to go above $\peak(\alpha\beta^k\gamma)$, and $\theta$ itself is box-safe, i.e., $\peak(\theta)=\eff(\theta)$.

\begin{remark}
\label{rmk: one cycle is crucial}
We remark that our definition of closing suffix, as well as the rest of the proof, crucially relies on the fact that the schemes in the LPS we consider have a \emph{single} cycle. This prevents us from extending our methods to e.g., $\dVASS{2}$.
\end{remark}

Using the bounds in \cref{lem:blondin one VASS}, we can now characterize box reachability by means of closing paths, as follows.
\begin{lemma}
    \label{lem:box reachability in 1 VASS by closing paths}
    Let $(0,q_0), (x^\tar,q^\tar)$ be configurations such that $x^\tar\ge p_3(|Q|,\norm{T})$ (where $p_3$ is some fixed polynomial).
    Then $(x^\tar,q^\tar)$ is box-reachable from $(0,q_0)$ if and only if there exists  $\alpha\beta^*\gamma\in \LPS$ and a path $\theta$ such that the following hold.
    \begin{itemize}
        \item There exists some $k\in \bbN$ such that $(0,q_0)\Nleads{\alpha\beta^k\gamma\theta}(x^\tar,q^\tar)$.
        \item $\theta$ closes $\alpha\beta^*\gamma$.
        \item $|\theta|\le \norm{T}\cdot \mshoot\cdot |Q|$.
    \end{itemize}
\end{lemma}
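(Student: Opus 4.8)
\textbf{Proof plan for \cref{lem:box reachability in 1 VASS by closing paths}.}
The plan is to prove both directions. The "if" direction is the easy one: given a short LPS $\alpha\beta^*\gamma$, a closing suffix $\theta$, and a witness exponent $k$ with $(0,q_0)\Nleads{\alpha\beta^k\gamma\theta}(x^\tar,q^\tar)$, I would invoke the observation made just before the lemma --- namely, that when $k$ is large enough (so that $k\cdot\eff(\beta)>\peak(\alpha)$, which forces $\eff(\beta)>0$), the overshoot of $\alpha\beta^k\gamma$ equals $\oshoot(\alpha\beta^*\gamma)$ and is at most $\mshoot$. Since $\theta$ closes the LPS, $\eff(\theta)\ge\oshoot(\alpha\beta^*\gamma)$, and $\theta$ itself is box-safe ($\drop(\theta)=0$, $\peak(\theta)=\eff(\theta)$). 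So the counter at the end of $\alpha\beta^k\gamma$ plus the monotone climb of $\theta$ never exceeds $\eff(\alpha\beta^k\gamma\theta)=x^\tar$, and the path stays non-negative (the witness already guarantees that). Hence $(x^\tar,q^\tar)$ is box-reachable. The bound on $|\theta|$ is not needed for this direction; it is part of what we must establish in the other direction.

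The "only if" direction is where the real work lies. Suppose $(x^\tar,q^\tar)$ is box-reachable, via a box-reaching path $\pi$. The idea (illustrated in \cref{fig:ell indices sketch}) is to locate a point along $\pi$ past which the remaining suffix is itself box-safe and short, use that suffix as $\theta$, and apply \cref{lem:blondin one VASS} to the prefix. Concretely, I would look at the sequence of "record heights" of $\pi$ --- indices where a new running maximum of the counter is attained --- and, reading $\pi$ from the end backwards, identify the last stretch of length at most $\norm{T}\cdot\mshoot\cdot|Q|$ that forms a monotone-enough suffix; call its starting configuration $(\ell_m$-value$,q_\gamma)$. The prefix $\pi'$ of $\pi$ up to that point reaches $(\ell_m,q_\gamma)$, and since $x^\tar\ge p_3(|Q|,\norm{T})$ is large, $\ell_m$ is still reasonably large, so by \cref{lem:blondin one VASS} there is a short LPS $\alpha\beta^*\gamma$ in $\LPS$ with $(0,q_0)\Nleads{\alpha\beta^k\gamma}(\ell_m,q_\gamma)$ for some $k$; we may further take $k$ large (pumping $\beta$, which must have positive effect since $\ell_m$ is large) so the overshoot formula applies and equals $\oshoot(\alpha\beta^*\gamma)\le\mshoot$. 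Setting $\theta$ to be the chosen suffix of $\pi$, we get $(0,q_0)\Nleads{\alpha\beta^k\gamma\theta}(x^\tar,q^\tar)$, $\drop(\theta)=0$, $\peak(\theta)=\eff(\theta)$, and $|\theta|\le\norm{T}\cdot\mshoot\cdot|Q|$.

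The main obstacle --- and the step I would spend the most care on --- is showing that one can always carve off a suffix $\theta$ that is simultaneously (i) box-safe ($\drop(\theta)=0$ and $\peak(\theta)=\eff(\theta)$), (ii) short (length $\le\norm{T}\cdot\mshoot\cdot|Q|$), and (iii) has effect at least $\oshoot(\alpha\beta^*\gamma)$, i.e. closes the LPS. The first two can clash: the natural box-safe suffix (everything after the last time $\pi$ dips below its eventual value) could be long, while a short suffix might not be monotone. The resolution is that since $\pi$ is already box-reaching, the last $\ell_1(x^\tar)$-record is attained at some point, and after that point the counter only ever climbs from its then-value up to $x^\tar$ without overshooting; within that tail one can still dip, but the dips cost at most $\oshoot(\alpha\beta^*\gamma)\le\mshoot$ in total, so a bounded-length witness exists --- this is exactly why the length bound has the shape $\norm{T}\cdot\mshoot\cdot|Q|$ (a $|Q|$ factor to avoid repeating states, a $\norm{T}$ factor to convert effect to length, and $\mshoot$ as the total amount of climbing needed). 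Making this counting argument precise, and checking that the LPS obtained for the prefix can indeed be pumped enough for the overshoot identity to hold while keeping the final effect exactly $x^\tar$, is the technical heart of the proof.
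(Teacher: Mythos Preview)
Your plan for the ``only if'' direction matches the paper's: work backward from the end of the box-reaching witness $\pi$, setting $\ell_1=|\pi|$ and $\ell_j=\max\{i\mid \eff(\pi[1,i])<\eff(\pi[1,\ell_{j-1}])\}$; with $m=\mshoot$ this yields a suffix $\theta=\pi[\ell_m,\ldots]$ satisfying $\drop(\theta)=0$, $\peak(\theta)=\eff(\theta)$, and $\mshoot\le\eff(\theta)\le\norm{T}\cdot\mshoot$, after which trimming repeated configurations (all lying in a counter-window of width $\norm{T}\cdot\mshoot$) gives $|\theta|\le\norm{T}\cdot\mshoot\cdot|Q|$. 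Your ``record heights / new running maximum'' language is slightly off --- these $\ell_j$ are last-time-below-threshold indices, not forward maxima --- but the mechanism is the same, and your identification of the three competing requirements on $\theta$ is on point.

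There is, however, a gap in your ``if'' direction. You invoke the overshoot identity $\oshoot(\alpha\beta^k\gamma)=\oshoot(\alpha\beta^*\gamma)$ under the assumption $k\cdot\eff(\beta)>\peak(\alpha)$, but the hypotheses only hand you \emph{some} $k$; you neither argue that $k$ must be large nor that $\eff(\beta)>0$. The paper closes this gap using precisely the bound $|\theta|\le\norm{T}\cdot\mshoot\cdot|Q|$ that you dismiss as unneeded: combined with $|\alpha|,|\beta|,|\gamma|\le\BLPS$, it upper-bounds $\eff(\alpha\beta^k\gamma\theta)$ as a function of $k$, and comparing with $x^\tar\ge p_3(|Q|,\norm{T})$ forces $k\ge\norm{T}\cdot\BLPS>\peak(\alpha)$ (whence also $\eff(\beta)>0$). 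One can in fact avoid the $|\theta|$-bound by a case split --- when $\eff(\beta)\le 0$ or $k$ is small, $\peak(\alpha\beta^k\gamma)$ is itself polynomially bounded and hence below $x^\tar$ --- but your plan does not carry this out, so as written the forward direction is incomplete.
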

\begin{proof}[Sketch]
The complete proof is delegated to \cref{apx:box reachability in 1 VASS by closing paths}.
    Fix $p_3(|Q|,\norm{T})$ much larger than $\BLPS$ and $\mshoot$.
    The ``if'' direction is fairly straightforward. Intuitively, since $x^\tar$ is large enough, then the number of repetitions of $\beta$ (namely $k$) is also large, and therefore the concatenation of $\theta$ actually yields a box-reaching path.

     The converse direction is more involved. We start with a high-level sketch of the proof, depicted in \cref{fig:ell indices}.
    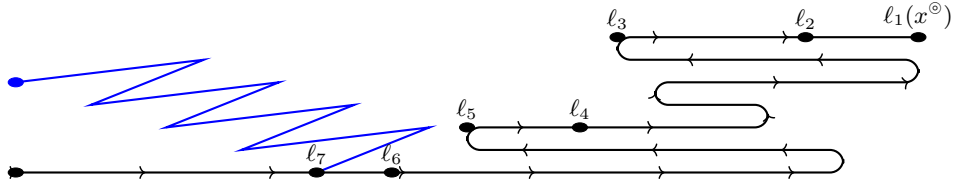
\begin{figure}[ht]
        \centering
        \begin{tikzpicture}[xscale=0.5,yscale=0.3]
     \draw[
         thick, blue,
         postaction={decorate}
     ]
     (0,4) -- (5,5) --  (2,3) -- (7,4) -- (4,2) -- (9,3) -- (6,1) -- (11,2) -- (8,0);

    \draw[fill,blue] (0,4) circle [radius=0.2] node[above] {};
    
     \draw[
         thick,
         rounded corners=5pt,
         decoration={
             markings,
             mark=between positions 0 and 1 step 0.05 with {\arrow[scale=0.7]{>}}
         },
         postaction={decorate}
     ]
     (0,0) -- (22,0) -- (22,1) -- (12,1) -- (12,2) -- (20,2) -- (20,3) -- (17,3) -- (17,4) -- (24,4) -- (24,5) -- (16,5) -- (16,6) -- (24,6);

    \foreach \x/\y/\label in {(0,0)/{},(24,6)/$\ell_1(x^\tar)$, (21,6)/$\ell_2$, (16,6)/$\ell_3$, (15,2)/$\ell_4$, (12,2)/$\ell_5$, (10,0)/$\ell_6$, (8,0)/$\ell_7$} {
        \draw[fill] \x circle [radius=0.2] node[above] {\label};
    }
    \end{tikzpicture}
        \caption{The sequence of indices defining box-safe suffixes. The $\yc$ axis is only for readability. The black path is a box-reaching path to $x^\tar$, the blue zig-zag is an LPS replacing the prefix up to $\ell_m$. Note that the LPS is not box-reaching itself, but its overshoot is covered by the path suffix}
        \label{fig:ell indices}
    \end{figure}
    Since $(x^\tar,q)$ is box reachable, there exists a box reaching path $(x_0,q_0)\Nleads{\pi}(x^\tar,q^\tar)$. Since $x^\tar$ is very large, then $\pi$ must be long, and in particular traverse many large $\xc$ values.
    
    Starting from $x^\tar$ and working backwards along $\pi$, we identify indices $\ell_m<\ell_{m-1}<\ldots <\ell_1$ such that the suffix $\pi[\ell_i+1,\ldots]$ remains entirely ``to the right'' of $\eff(\pi[1,\ell_i])$. That is, these suffixes are box reaching paths.
    By working backwards enough, we can reach $\ell_m$ such that $\eff(\pi[1,\ell_m])$ is far enough from $x^\tar$ so that the overshoot of any linear path scheme in $\LPS$ is smaller than $x^\tar-\eff(\pi[1,\ell_m])$. 
    This guarantees that we can use a linear path scheme to reach the $\ell_m$ configuration, and continue from there with a box reaching suffix to $(x^\tar,q^\tar)$. Moreover, we can bound the length of this suffix to satisfy the conditions in the lemma.
    \hfill \qed
\end{proof}

We can now characterize the box-reachability set for \oVASS.
\begin{theorem}
    \label{thm:box reach for 1VASS is semilinear}
    For every \oVASS $\tup{Q,T}$ and states $q_0,q^\tar\in Q$, the set $\boxreach{\tup{Q,T},q_0\to q^\tar}$ is effectively semilinear.
\end{theorem}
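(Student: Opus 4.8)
The proof is an application of \cref{lem:box reachability in 1 VASS by closing paths} together with a standard semilinearity argument for one-cycle linear path schemes. Fix the threshold polynomial $p_3(|Q|,\norm{T})$ from that lemma, chosen (if necessary, enlarged) to exceed $|Q|\cdot\norm{T}\cdot(2\BLPS+\norm{T}\cdot\mshoot\cdot|Q|)$, i.e.\ larger than the effect of any path of the form $\alpha\gamma\theta$ with $|\alpha|,|\gamma|\le\BLPS$ and $|\theta|\le\norm{T}\cdot\mshoot\cdot|Q|$. For target values $x^\tar<p_3(|Q|,\norm{T})$ there are only finitely many candidates, and for each of them membership in $\boxreach{\tup{Q,T},q_0\to q^\tar}$ is decidable: a shortest box-reaching path to $(x^\tar,q^\tar)$ never repeats a configuration $(x,q)$ with $0\le x\le x^\tar$, hence has length at most $(x^\tar+1)\cdot|Q|$, so the search space is finite and effective. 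Thus $\boxreach{\tup{Q,T},q_0\to q^\tar}\cap[0,p_3(|Q|,\norm{T}))$ is a finite, hence effectively semilinear, set, and it remains to handle the part $[p_3(|Q|,\norm{T}),\infty)$.

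For $x^\tar\ge p_3(|Q|,\norm{T})$, \cref{lem:box reachability in 1 VASS by closing paths} tells us that $x^\tar\in\boxreach{\tup{Q,T},q_0\to q^\tar}$ iff there are $\alpha\beta^*\gamma\in\LPS$ and a path $\theta$ closing $\alpha\beta^*\gamma$ with $|\theta|\le\norm{T}\cdot\mshoot\cdot|Q|$ and some $k\in\bbN$ with $(0,q_0)\Nleads{\alpha\beta^k\gamma\theta}(x^\tar,q^\tar)$. Since $T$ is finite and $|\alpha|,|\beta|,|\gamma|\le\BLPS$ while $|\theta|$ is bounded as above, there are only finitely many tuples $(\alpha,\beta,\gamma,\theta)$, and this list is effectively enumerable; moreover the side conditions (state-compatibility of the concatenation, that it starts in $q_0$ and ends in $q^\tar$, and that $\theta$ closes $\alpha\beta^*\gamma$, using the effectively computable values $\eff(\beta),\oshoot(\gamma),\oshoot(\beta),\eff(\gamma)$) are all decidable. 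Hence $\boxreach{\tup{Q,T},q_0\to q^\tar}\cap[p_3(|Q|,\norm{T}),\infty)$ is a finite union, over the valid tuples, of the sets $S_{\alpha,\beta,\gamma,\theta}$ of values $\eff(\alpha\beta^k\gamma\theta)=\eff(\alpha)+\eff(\gamma)+\eff(\theta)+k\cdot\eff(\beta)$ that are $\ge p_3(|Q|,\norm{T})$ and for which $\alpha\beta^k\gamma\theta$ is a valid $\bbN$-path; since a finite union of semilinear sets is semilinear, it suffices to show each $S_{\alpha,\beta,\gamma,\theta}$ is effectively semilinear.

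Fix such a tuple. The constant term $\eff(\alpha)+\eff(\gamma)+\eff(\theta)$ is bounded by the quantity used to define $p_3(|Q|,\norm{T})$ above, so a value $\ge p_3(|Q|,\norm{T})$ can be reached only when $\eff(\beta)>0$ and $k$ is at least some effectively computable threshold $k_1$; tuples with $\eff(\beta)\le 0$ contribute nothing. For a tuple with $\eff(\beta)>0$, the set of $k\in\bbN$ for which $\alpha\beta^k\gamma\theta$ is a valid $\bbN$-path is exactly $\{k\mid k\ge k_0\}$ for an effectively computable $k_0$: the only obstruction is the counter dipping below $0$, and since $\beta$ has strictly positive effect and $\alpha,\beta,\gamma,\theta$ are fixed paths of bounded length, once $k$ is large enough that the counter after the prefix $\alpha\beta^{k}$ stays above $\drop(\beta)+\drop(\gamma)+\drop(\theta)$ the whole concatenation is non-negative; a finite case analysis over the drops of $\alpha,\beta,\gamma,\theta$ (exactly as in the standard LPS pumping argument behind \cref{lem:blondin one VASS}) pins down $k_0$. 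Consequently $S_{\alpha,\beta,\gamma,\theta}=\{\,c+k\cdot e \mid k\ge\max(k_0,k_1)\,\}$ with $c\in\bbN$ and $e=\eff(\beta)>0$, an arithmetic progression, which is semilinear and effectively computable. Taking the union with the finite low part from the first paragraph gives that $\boxreach{\tup{Q,T},q_0\to q^\tar}$ is effectively semilinear. The main obstacle in this argument is already discharged by \cref{lem:box reachability in 1 VASS by closing paths}; within the present proof, the only point requiring care is the observation that for a fixed one-cycle scheme plus a fixed closing suffix the admissible values of $k$ form an eventual ray, which is precisely where the single-cycle restriction (see \cref{rmk: one cycle is crucial}) is used.
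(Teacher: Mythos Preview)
Your proof is correct and follows essentially the same approach as the paper's: split at the threshold $p_3(|Q|,\norm{T})$, handle the low part by finite enumeration, and for the high part invoke \cref{lem:box reachability in 1 VASS by closing paths} and express the set as a finite union over tuples $(\alpha,\beta,\gamma,\theta)$ of arithmetic progressions in $k$. If anything, you are more explicit than the paper about why each $S_{\alpha,\beta,\gamma,\theta}$ is a single linear set (discarding $\eff(\beta)\le 0$ and arguing that the admissible $k$ form an eventual ray), whereas the paper simply asserts that the union is over linear sets $\{\eff(\alpha\beta^k\gamma\theta)\mid k>\peak(\alpha)\}$. One small imprecision: your claim that the set of valid $k$ is \emph{exactly} $\{k\ge k_0\}$ is not literally true in all cases (e.g.\ if $\eff(\alpha)<\drop(\beta)$ then no $k\ge 1$ works regardless of how large $k$ is); but since you intersect with $[p_3,\infty)$ afterwards, only the eventual behaviour matters, and folding such checks into the ``side conditions'' that discard a tuple fixes this immediately.
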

\begin{proof}
    Consider the finite set $S=\{x^\tar\mid x^\tar\le p_3(|Q|,\norm{T})\wedge \exists \pi.\ (0,q_0)\Bleads{\pi} (x^\tar,q^\tar)\}$ of box-reachable configurations below $p_3(|Q|,\norm{T})$.
    
    By \cref{lem:box reachability in 1 VASS by closing paths} we have that $(x^\tar,y^\tar)$ is box reachable from $(0,q_0)$ if and only if either $x^\tar\in S$, or there exists $\alpha\beta^*\gamma\in \LPS$ and $\theta$ with $|\theta|\le \norm{T}\cdot \mshoot\cdot |Q|$ such that $\theta$ closes $\alpha\beta^*\gamma$ and $(0,q_0)\Nleads{\alpha\beta^k\gamma\theta}(x^\tar,q^\tar)$.

    For every $\alpha\beta^*\gamma\in \LPS$, let $q'$ be the last state in $\gamma$ and define 
    $\Theta(\alpha\beta^*\gamma)=\{\theta\mid (0,q')\Nleads{\theta}(\eff(\theta),q^\tar) \text{ closes } \alpha\beta^*\gamma\}$.
    Note that $\Theta(\alpha\beta^*\gamma)$ and $\LPS$ are both finite and effectively computable. 
    Rephrasing the above, we have that $x^\tar\in \boxreach{\tup{Q,T},q_0\to q^\tar}$ if and only if $x^\tar\in S$ or there exists $\alpha\beta^*\gamma\in \LPS$ and $\theta\in \Theta(\alpha\beta^*\gamma)$ such that $x^\tar=\eff(\alpha\beta^k\gamma\theta)$ for some $k>\peak(\alpha)$.
    
    Thus, we have
    \[\boxreach{\tup{Q,T},q_0\to q^\tar}=S\cup \{\eff(\alpha\beta^k\gamma\theta)\mid \alpha\beta^*\gamma\in \LPS,\ \theta\in \Theta(\alpha\beta^*\gamma),k\in \bbN\}\]
    which is a finite union (over $\LPS$ and $\Theta$) of the linear sets $\{\eff(\alpha\beta^k\gamma\theta)\mid k>\peak(\alpha)\}$.
    \hfill \qed
\end{proof}

\subsection{Proof of \cref{lem:box reachability in 1 VASS by closing paths}}
\label{apx:box reachability in 1 VASS by closing paths}
    Fix $p_3(|Q|,\norm{T})=\norm{T}\cdot (\BLPS^2\cdot \norm{T}+2\BLPS+2\norm{T}\cdot \mshoot\cdot |Q|)$ (recall that $\BLPS$ and $\mshoot$ are also polynomials in $|Q|,\norm{T}$).

    The ``if'' direction is fairly straightforward. Intuitively, since $x^\tar$ is large enough, then the number of repetitions of $\beta$ (namely $k$) is also large, and therefore the concatenation of $\theta$ actually yields a box-reaching path.
    
    Formally, assume the conditions of the lemma hold. Since $\alpha\beta^*\gamma\in \LPS$, then $|\alpha|,|\beta|,|\gamma|\le \BLPS$. Also recall that $|\theta|\le \norm{T}\cdot \mshoot\cdot |Q|$. Therefore, we have $\eff(\alpha\beta^k\gamma\theta)\le \norm{T}((k+2)\BLPS+\norm{T}\cdot \mshoot\cdot |Q|)$.
    We also have $\eff(\alpha\beta^k\gamma\theta)=x^\tar$ and $x^\tar\ge p_3(|Q|,\norm{T})\ge \norm{T}\cdot (\BLPS^2\cdot \norm{T}+2\BLPS+\norm{T}\cdot \mshoot\cdot |Q|)$ (this lower bound is essentially $p_3(|Q|,\norm{T})\ge p_3(|Q|,\norm{T})-\norm{T}\mshoot$, the actual bound described by $p_3$ is tighter in the second part of this proof), it follows that 
    \[
    \begin{split}
        &\norm{T}((k+2)\BLPS+\norm{T}\cdot \mshoot\cdot |Q|) \ge \norm{T}\cdot (\BLPS^2\cdot \norm{T}+2\BLPS+\norm{T}\cdot \mshoot\cdot |Q|)\implies \\
        &(k+2)\BLPS+\norm{T}\cdot \mshoot\cdot |Q| \ge \BLPS^2\cdot \norm{T}+2\BLPS+\norm{T}\cdot \mshoot\cdot |Q| \implies\\
        &(k+2)\BLPS\ge \BLPS^2\cdot \norm{T}+2\BLPS \implies k\ge \norm{T}\BLPS
    \end{split}
    \]
    And in particular, $k> \BLPS\norm{V}\ge \peak(\alpha)$ since $|\alpha|\le \BLPS$.

    Thus, since $\theta$ closes $\alpha\beta^*\gamma$ and $k$ is large enough, we have that $\alpha\beta^k\gamma\theta$ is a box-reaching path, and therefore $x^\tar$ is box-reachable.

    The converse direction is more involved. We start with a high-level sketch of the proof, depicted in \cref{fig:ell indices}.


    

    Since $(x^\tar,q)$ is box reachable, there exists a box reaching path $(x_0,q_0)\Nleads{\pi}(x^\tar,q^\tar)$. Since $x^\tar$ is very large, then $\pi$ must be long, and in particular traverse many large $\xc$ values.
    
    Starting from $x^\tar$ and working backwards along $\pi$, we identify indices $\ell_m<\ell_{m-1}<\ldots <\ell_1$ such that the suffix $\pi[\ell_i+1,\ldots]$ remains entirely ``to the right'' of $\eff(\pi[1,\ell_i])$. That is, these suffixes are box reaching the paths.
    By working backwards enough, we can reach $\ell_m$ such that $\eff(\pi[1,\ell_m])$ is far enough from $x^\tar$ so that the overshoot of any linear path scheme in $\LPS$ is smaller than $x^\tar-\eff(\pi[1,\ell_m])$. 
    This guarantees that we can use a linear path scheme to reach the $\ell_m$ configuration, and continue from there with a box reaching suffix to $(x^\tar,q^\tar)$. Moreover, we can bound the length of this suffix to satisfy the conditions in the lemma.

    We proceed with the details.
    Let $\pi$ be a box-reaching path such that $(x_0,q_0)\Nleads{\pi}(x^\tar,q^\tar)$. 
    Since $x^\tar\ge p_3(|Q|,\norm{T})$, then $\pi$ visits large $\xc$ coordinates. 
    Fix some $m\in \bbN$, we define a sequence of indices $\ell_m<\ldots<\ell_1$ by setting $\ell_1= |\pi|$, and for every $1< j\le m$ define $\ell_j$ to be the maximal index where $\pi$ is to the left of its location at $\ell_{j-1}$. That is, $\ell_j=\max\{i\mid \eff(\pi[1,i])<\eff(\pi[1,\ell_{j-1}])\}$. 
    
    Observe that for every $1< j< m$ we have by definition that $\eff(\pi[\ell_j,\ell_{j-1}]) > 0$ and for every $i>\ell_{j}$ we have $\eff(\pi[1,i])\ge \eff(\pi[1,\ell_j])$ (otherwise $\ell_j$ would not be maximal).
   
Recall that $\norm{T}$ is the maximal entry in any transition of $\tup{Q,T}$. 
In particular, $\eff(\pi[\ell_j,\ell_{j-1}])\le \norm{T}$ (otherwise $\ell_j$ is not maximal) and thus $\eff(\pi[\ell_{m},\ell_{1}])\le m\norm{T}$. That is, the entire suffix of $\pi$ within this sequence is ``near'' $x^\tar$. However by the strict inequality of the $\ell_j$ we have $\eff(\pi[\ell_{m},\ell_{1}])\ge m$.

Take $m=\mshoot$. Since $x^\tar>p_3(|Q|,\norm{T})>\mshoot\cdot \norm{T}$ it follows that $\ell_m$ is well-defined (in that the $\max$ in the definition of $\ell_j$ is over a nonempty set -- we do not reach the ``beginning'' of $\pi$).

Note that $\eff(\pi[1,\ell_m])+\eff(\pi[\ell_m+1,\ldots])=x^\tar$, so by the above we have the following properties:
\begin{enumerate}
    \item $\eff(\pi[\ell_{m},\ell_{1}])\le \norm{T}\mshoot$.
    \item $\eff(\pi[\ell_{m},\ell_{1}])\ge \mshoot$.
    \item $\eff(\pi[1,\ell_m])\le x^\tar-\mshoot$.
    \item $\eff(\pi[1,\ell_m])\ge x^\tar-\norm{T}\mshoot$.
\end{enumerate}
Consider the configuration $(x',q')$ reached by $\pi[1,\ell_m]$ from $(0,q_0)$, then $x'\ge p_3(|Q|,\norm{T})-\norm{T}\mshoot$ by Property 4 above. 
By the exact same analysis as the ``if'' direction (here we used $p_3$ in a tight manner, since we ``lose'' $\norm{T}\mshoot$), any induced path $\alpha\beta^k\gamma$ of $\alpha\beta^*\gamma\in \LPS$ that reaches $(x',q')$ must satisfy $k\ge \peak(\alpha)$, and therefore $\oshoot(\alpha\beta^k\gamma)\le \mshoot$.
Thus, by Property 3, we have
\[
\begin{split}
&\peak(\alpha\beta^k\gamma) =\oshoot(\alpha\beta^k\gamma)+\eff(\alpha\beta^k\gamma) \\
&\le\mshoot + \eff(\pi[1,\ell_m]) \le   \mshoot +x^\tar-\mshoot = x^\tar 
\end{split}
\]
Denote $\theta=\pi[\ell_{m},\ell_{1}]$, then $\theta$ is a box-safe suffix of $\pi$ (by the construction of the $\ell_i$). We can assume without loss of generality that the trace $(x',q')\leads{\theta}(x^\tar,q^\tar)$ does not repeat a configuration, otherwise an infix can be removed. All the configurations along this trace are in $[x^\tar- \eff(\theta),x^\tar]\times Q$ (since $\theta$ is box safe, and by the construction of the $\ell_j$), and by
Property 1 we have $\eff(\theta)\le \norm{T}\cdot \mshoot$. By the pigeonhole principle, it follows that $|\theta|\le \norm{T}\cdot \mshoot\cdot |Q|$, satisfying the length constraints.

Finally, observe that $\theta$ closes $\alpha\beta^k\gamma$, since by Property 2 we have $\eff(\theta)\ge \mshoot\ge \oshoot(\alpha\beta^k\gamma)$. Thus, $(0,q_0)\Nleads{\alpha\beta^k\gamma\theta}(x^\tar,q^\tar)$.
\hfill\qed

\end{document}